\newcommand{\ourmethod}{Adaptive Beam Search}
  \newcommand{\cAAAI}[1]{AAAI\ Conference\ on\ Artificial (AAAI)}
\newcommand{\E}{\mathbb{E}}
\newcommand{\R}{\mathbb{R}}  
\DeclareMathOperator{\argmin}{argmin}
\DeclareMathOperator{\argmax}{argmax}
\newcommand{\bv}[1]{\mathbf{#1}}
\algrenewcommand\algorithmicrequire{\textbf{Input:}}
\algrenewcommand\algorithmicensure{\textbf{Output:}}
\newcommand{\algrule}[1][.2pt]{\par\vskip.2\baselineskip\hrule height #1\par\vskip.2\baselineskip}
\newtheorem{theorem}{Theorem}
\newtheorem{definition}{Definition}
\newtheorem{claim}[theorem]{Claim}
\title{Distance Adaptive Beam Search for Provably Accurate Graph-Based Nearest Neighbor Search}
\author{
    Yousef Al-Jazzazi\\ New York University\\ Abu Dhabi\\ \texttt{ya2225@nyu.edu}
	\And 
	Haya Diwan \\ New York University\\ \texttt{hd2371@nyu.edu}
	\And 
 	Jinrui Gou \\ New York University\\ \texttt{jg6226@nyu.edu}
  \And 
	Cameron Musco\\ UMass Amherst\\ \texttt{cmusco@cs.umass.edu}
	\And
	Christopher Musco\\ New York University\\ \texttt{cmusco@nyu.edu}
 	\And 
 	Torsten Suel \\ New York University\\ \texttt{torsten.suel@nyu.edu}
}
\begin{document}

\maketitle

\vspace{-.5em}
\begin{abstract}
Nearest neighbor search is central in machine learning, information retrieval, and databases.
For high-dimensional datasets, graph-based methods such as HNSW, DiskANN, and NSG have become popular thanks to their empirical accuracy and efficiency. These methods construct a directed graph over the dataset and perform beam search on the graph to find nodes close to a given query. While significant work has focused on practical refinements and theoretical understanding of graph-based methods, many questions remain.
We propose a new distance-based termination condition for beam search to replace the commonly used condition based on beam width. We prove that, as long as the search graph is \emph{navigable}, our resulting {Adaptive Beam Search} method is guaranteed to approximately solve the nearest-neighbor problem, establishing a connection between navigability and the performance of graph-based search. We also provide extensive experiments on our new termination condition for both navigable graphs and approximately navigable graphs used in practice, such as HNSW and Vamana graphs. We find that Adaptive Beam Search  outperforms standard beam search over a range of recall values, data sets, graph constructions, and target number of nearest neighbors. It thus provides a simple and practical way to improve the performance of popular methods.
\end{abstract}
\vspace{-.5em}

\section{Introduction}

High-dimensional nearest neighbor search is a basic building block in many areas, including image and video processing \cite{GarciaDebreuveBarlaud:2008,JegouDouzeSchmid:2011}, information retrieval \cite{AumullerBernhardssonFaithfull:2020,PapadopoulosManolopoulos:2005}, and algorithm design \cite{CharikarKNS20,karppa22a}. It is  central to modern machine learning, underlying document and media search based on learned embeddings \cite{Bruch:2024,LuoLakshmanShrivastava:2022,mitra2018introduction}, and most retrieval augmented generation (RAG) systems for large-language models \cite{LewisPerezPiktus:2020,mialon2023augmented}. Nearest neighbor search also plays a role in hard-negative mining \cite{XiongXiongLi:2021}, accelerating transformer architectures \cite{KitaevKaiserLevskaya:2020}, and other applications across machine learning \cite{SpringShrivastava:2017}.  

Formally, in the $k$-nearest neighbor search problem, we are given a set of data points, often machine-learned vector embeddings of documents, images, or other media \cite{DevlinChangLee:2019,FormalPiwowarskiClinchant:2021}. We are also given a distance measure, such as the Euclidean distance, or something more exotic like Chamfer distance \cite{JayaramDhulipalaHadian:2024}. The goal is to pre-process the dataset into a search data structure so that, given any query point $q$, we can efficiently find the $k$ data points closest to $q$ with respect to the distance measure. Doing so exactly is notoriously difficult  in high-dimensions, so applications typically rely on 
approximate nearest neighbor (ANN) methods that attempt to find \emph{most} of the $k$ closest neighbors.

Many different approaches have been proposed for ANN search. Popular methods include locality sensitive hashing (LSH) \cite{AndoniIndyk:2008,AndoniIndykLaarhoven:2015,IndykMotwani:1998,LvJosephsonWang:2007}, inverted file indices (IVF) based on product quantization or clustering \cite{JegouDouzeSchmid:2011,JohnsonDouzeJegou:2021,MujaLowe:2014}, and more \cite{BeygelzimerKakadeLangford:2006,Kleinberg:1997,KushilevitzOstrovskyRabani:1998}. In this work, we focus on \emph{graph-based} ANN methods, which have been extensively studied, and are commonly used thanks to strong empirical performance (graph-based methods have topped leader boards at a number of recent ANN competitions \cite{SimhadriAumullerIngber:2024,SimhadriWilliamsAumuller:2022}).

\smallskip
\noindent\textbf{Graph-Based Nearest Neighbor Search.}
The high-level idea of graph-based methods is simple. We construct an index by building a directed graph, $G$, with one node for each data point. 
Given a query, $q$, we search the index by starting at an arbitrary node and performing a greedy graph traversal, exploring neighbors in that graph that are closest to $q$. A specific choice of graph construction and traversal method comprises a particular ``graph-based'' nearest neighbor search method.

Many algorithms for graph construction have been proposed, including the Hierarchical Navigable Small World (HNSW) approach \cite{MalkovYashunin:2020}, Vamana/DiskANN \cite{KrishnaswamyManoharSimhadri:2024,SubramanyaDevvritKadekodi:2019}, Navigating Spreading-out Graphs (NSG) \cite{FuXiangWang:2019}, and others \cite{ManoharShenBlelloch:2024,Vargas-MunozGoncalvesDias:2019} All of these methods construct a graph $G$ that, for a given node $i$, contain out-edges to nearest neighbors of $i$, as well as ``long range'' connections to nodes far away from $i$. Such constructions are loosely motivated by the concept of \emph{navigability}, which dates back to pioneering work on local graph routing by Kleinberg \cite{Kleinberg:2000b,Kleinberg:2000} and  Milgram \cite{Milgram:1967}. We provide a formal definition of navigability in \Cref{sec:prelim}, but the property roughly guarantees that there is a path from any node $i$ in $G$ to any node $j$ so that distance to $j$ strictly decreases along the path.

While graph constructions vary greatly, the choice of greedy traversal method used in graph-based nearest neighbor search has seen less innovation. A generalization of greedy search called \emph{beam search} is almost ubiquitous. Parameterized by a  beam width $b \ge k$, beam search maintains a list of $b$ candidate nearest neighbors and computes the query's distance to each of those candidates' neighbors, updating them until it fails to find any better candidates. See \Cref{sec:techOverview} for a formal description.

\smallskip
\noindent\textbf{Our Contributions.}
While graph-based ANN methods have seen significant practical success, their performance is poorly understood from a theoretical perspective. This is in contrast to methods like locality sensitive hashing, for which it is possible to prove strong worst-case approximation guarantees \cite{AndoniIndyk:2008,AndoniIndykNguyen:2014}. A lack of theory makes it difficult to iterate on and improve existing graph-based methods, and to understand the limitations of these methods. We aim to address this theory-practice gap, and in turn, introduce principled improvements to existing methods.  

In particular, we re-examine the ubiquitous beam search method, showing that it can be viewed as a specific stopping rule for a much more general search procedure. This perspective motivates a new algorithm called Adaptive Beam Search, which stops searching for candidates based on a \emph{distance-based criterion} instead of a fixed beam width, $b$. Our main theoretical result (\Cref{thm:main}) is to prove that Adaptive Beam Search returns provable approximate nearest neighbors \emph{whenever the search graph $G$ is navigable}. To the best of our knowledge, this result is the first to theoretically connect the performance of greedy search (specifically, beam search) to the property of navigability. 

Moreover, our theoretical results translate into practical performance. We perform an extensive experimental evaluation of Adaptive Beam Search, comparing it to fixed-width beam search over a wide range of data sets, graph constructions, recall values, and target number of nearest neighbors. The method universally outperforms classic beam search, typically providing a $10-50\%$ reduction in the number of distance computations required for a given level of recall. Moreover, Adaptive Beam Search can be implemented with only minor code changes to existing graph-based libraries. We thus hope that, beyond its theoretical relevance, the method will have practical impact. 

\noindent
\textbf{Roadmap.} The remainder of this paper is organized as follows. In \Cref{sec:prelim}, we discuss technical preliminaries and related work. In \Cref{sec:techOverview}, we introduce our Adaptive Beam Search method and its motivating ideas. In \Cref{sec:theory}, we prove that  Adaptive Beam Search solves approximate nearest neighbor search on navigable graphs (\Cref{thm:main}). In \Cref{sec:experiments}, we evaluate \ourmethod{} on sparse navigable graphs and common heuristic graph constructions including HNSW and Vamana.

\section{Background and Related Work}
\label{sec:prelim}
We start by defining notation used throughout. Our goal in this paper is to find nearest neighbors in a metric space $\mathcal{X}$ equipped with a distance function $d: \mathcal{X}\times \mathcal{X}\rightarrow \R^+$.\footnote{In particular, we just require that for all $i,j,k \in \mathcal{X}$, $d(i,j) = d(j,i)$, $d(i,j) > 0$ when $i \neq j$, $d(i,i) = 0$, and $d(i,j) + d(j,k) \geq d(i,k)$ (i.e., triangle inequality holds).} We are given a database of $n$ items in $\mathcal X$, which we label $\{1, \ldots, n\}$. We want to find the nearest $k \leq n$ items to a given query $q \in \mathcal{X}$. E.g., for $k = 1$, the goal is to find $\argmin_{j\in \{1, \ldots, n\}} d(q, j)$. To avoid corner cases, we assume items in the database are unique, i.e., $d(i,j) > 0$ for all $i,j \in \{1, \ldots, n\}$, $i \neq j$. 

In practice, the $n$ database items and the query $q$ are usually associated with vectors (e.g., machine learned embeddings) $\bv{x}_1,\ldots,\bv{x}_n$ and $\bv{x}_q \in \R^m$. The distance function $d(i,j)$ is chosen to be some function of these vectors, e.g.,  the Euclidean distance, $d(i,j) = \|\bv{x}_i - \bv{x}_j\|_2$.

\smallskip

\noindent\textbf{Graph Navigability.} 
Our theoretical guarantees assume use of a \emph{navigable search graph} over $n$ nodes corresponding to our $n$ database items.
While the term ``navigable'' is sometimes used informally in the literature, we use the following precise definition. Consider a directed graph $G = (V,E)$, with $V = \{1, \ldots, n\}$. For a node $x$, let $\mathcal{N}_G(x)$ denote its set of  \emph{out-neighbors}. Define:
\begin{definition}[Navigable Graph]
  \label{def:nav}
  A directed graph $G$ is \emph{navigable} under distance function $d$ if for any nodes $x,y \in \{1, \ldots, n\}$ with $d(x,y) > 0$, there is some $z\in \mathcal{N}_G(x)$ with $d(z,y) < d(x,y)$. 
\end{definition}
Navigability ensures that, for any starting node $s$ and target node $t$, a standard greedy search where we always move to the neighbor of the current node closest to $t$, always converges to $t$. 

When all distances between $\{1, \ldots, n\}$ are unique (this can be ensured by simply tie-breaking based on node id) it was recently shown that \emph{any data set} has an efficiently computable navigable graph with average degree $O(\sqrt{n \log n})$ for \emph{any distance metric} \cite{DiwanGouMusco:2024}. While the above bound is nearly optimal for worst-case data sets, much sparser navigable graphs often exist. For the Euclidean distance in $m$ dimensions, Arya and Mount construct navigable graphs with degree $2^{O(m)}$ \cite{AryaMount:1993}. For general metrics, Indyk and Xu construct navigable graphs with degree $2^{O(m')\log \Delta}$ where $m'$ is the doubling dimension of the data under $d$ and $\Delta = {\max_{i,j}d(i,j)}/{\min_{i,j}d(i,j)}$ is the dynamic range \cite{IndykXu:2023}.

Why do we focus on navigability? Navigability has become a standard notion of ``quality'' for graphs used in nearest neighbor search. Indeed, the term lends its name to popular graph-based search methods such as the Navigable Small World (NSW) \cite{MalkovPonomarenkoLogvinov:2014} and  Hierarchical Navigable Small World (HNSW)  \cite{MalkovYashunin:2020} methods. Neither of these methods  construct graphs that are \emph{provably} navigable, although they produce graphs that should be approximately navigable in practical settings.
Surprisingly, however, to the best of our knowledge, no prior work formally links the accuracy of graph-based search to this intuitive notion of graph quality. As discussed, a major goal here is to address this theory-practice gap, and to use the resulting theory to propose new practical algorithms. 

Related to our approach is a recent paper by Indyk and Xu \cite{IndykXu:2023}, which proves accuracy guarantees for standard beam search under the assumption that the search graph is ``$\alpha$-shortcut reachable'', a strictly stronger criterion than navigability. A graph is $\alpha$-shortcut reachable if, for all $x,y \in \{1, \ldots, n\}$ with $d(x,y) > 0$, there is some $z\in \mathcal{N}_G(x)$ with $\alpha \cdot d(z,y) < d(x,y)$ for some parameter $\alpha \ge 1$. 
Indeed, navigability exactly corresponds to this definition with $\alpha = 1$. However, the result from \cite{IndykXu:2023} only yields a bounded approximation factor for $\alpha > 1$ (concretely,  they obtain approximation factor $\frac{1+\alpha}{1-\alpha}$). Thus, obtaining theoretical results for graphs that are simply navigable remains an open question. 

One reason this question is of practical importance is that navigable graphs can in general be much sparser than $\alpha$-shortcut reachable graphs. While it is possible to construct a navigable graph with average degree $O(\sqrt{n \log n})$ for any database under any metric (under the mild assumption of unique distances) \cite{DiwanGouMusco:2024}, it is not hard to observe that for any fixed $\alpha > 1$, even a random point set in $O(\log n)$-dimensional Euclidean space does not  admit any sparse $\alpha$-shortcut reachable graph (i.e., with average degree $< n -1$) with high probability (see \Cref{sec:alphaFailure} for details). 


\subsection{Additional Related Work.}
Beyond \cite{IndykXu:2023}, a few other papers have studied graph-based ANN search from a theoretical perspective. E.g., \cite{Laarhoven:2018} and \cite{ProkhorenkovaShekhovtsov:2020} study time-space tradeoffs akin to those available for LSH methods, but only for random data. More significant work has focused on practical algorithmic improvements. E.g., work has studied parallel implementations \cite{ManoharShenBlelloch:2024}, methods for dynamic datasets \cite{SinghSubramanyaKrishnaswamy:2021,XuManoharBernstein:2025},  distance approximations \cite{XuSilwalIndyk:2024}, graph pruning \cite{graspWSDM2022}, filtered search \cite{GollapudiKariaSivashankar:2023}, and search with coverage criteria \cite{AnandIndykKrishnaswamy:2025}. We are not aware of work that, like ours, studies significant alternatives to beam width-based termination in beam search, although certain modifications, such as smarter initialization strategies and early stopping criteria have been studied \cite{li:sigmod20,ManoharKimBlelloch:2025,ZhaoTianHuang:2023}.

\section{Adaptive Beam Search}
\label{sec:techOverview}
Beam search is the defacto search method used for graph-based ANN \cite{MalkovYashunin:2020,SubramanyaDevvritKadekodi:2019}. We start with a key observation: beam search can be reframed by decoupling the method into two key components 1) a \emph{search order}, determined by a  method for traversing the search graph to find candidate nearest neighbors and 2) a \emph{stopping criterion}, which governs when the algorithm stops considering candidates. 

Our \ourmethod{} method modifies the standard beam search algorithm only by changing the stopping criterion. The search order remains the same. Surprisingly, even this simple change leads to an algorithm that both enjoys strong theoretical approximation guarantees when the underlying graph is navigable (see \Cref{thm:main}) and outperforms standard beam search empirically.

We suspect the ``decoupled view'' of beam search is not novel, but we have not seen it presented. So, in the next section, we detail this reframing and show how a change in stopping criterion yields other search algorithms, like simple greedy search and \ourmethod{}. We intuitively motivate the stopping criterion used in \ourmethod{} before formally analyzing the method in \Cref{sec:theory}.

\subsection{Decoupling Beam Search as Ordered Traversal With a Stopping Condition}

To be concrete, we provide pseudocode for a generic version of beam search in \Cref{alg:gen_beam_search}. Implementation details are deferred to \Cref{sec:pseudo}. 
 Importantly, such details do not affect the number of \emph{distance computations} performed by the algorithm -- i.e., how many times we evaluate $d(q,i)$ for a query point, $q$, and candidate nearest neighbor, $i$. Distance computations typically dominate the cost of search in practice and, indeed, for the stopping criteria considered in this paper, all other operations can be implemented in time nearly-linear in the number of such computations.  

\begin{algorithm}[h!]
    \caption{Generalized Beam Search}
    \label{alg:gen_beam_search}
    \begin{algorithmic}[1]
        \Require Search graph $G$ over nodes $\{1, \ldots, n\}$, starting node $s$, distance function $d$, query $q$, target number of nearest neighbors $k$.  
        \Ensure Set of $k$ nodes ${\mathcal{B}} \subset \{1, \ldots, n\}$, where each $x\in {\mathcal{B}}$ is ideally close to $q$ with respect to $d$.
        \algrule
        \State Initialize min-priority queues $\mathcal{C}$ and $\mathcal{D}$.\hspace{.5em} 
        \Comment{Elements are nodes, priorities are distances to $q$. $\mathcal{D}$ contains all discovered nodes. $\mathcal{C}$ contains discovered nodes that are not yet expanded.}
        \State Insert $(s, d(q, s))$ into $\mathcal{C}$ and $\mathcal{D}$. 
        \While{$C$ is not empty}
            \State $(x,d(q,x)) \gets \textrm{extractMin}(\mathcal C)$. \Comment{Pop min. distance node.}
            \If{$x$ satisfies \textcolor{red}{\textbf{[termination condition]}}} 
            \State \textbf{break} 
            \EndIf
            \State For all $y \in \mathcal{N}_G(x)$, if $y$ is not in $\mathcal{D}$, insert $(y,d(q,y))$ into $\mathcal{C}$ and $\mathcal{D}$. \Comment{Expand node $x$.}
        \EndWhile
        \State Obtain $\mathcal{B}$ by running extractMin $k$ times on $\mathcal{D}$, which returns the $k$ elements with the smallest distances from the query, $q$.
    \end{algorithmic}
\end{algorithm}

\Cref{alg:gen_beam_search} maintains a queue of ``discovered nodes'' $\mathcal D$ whose distances to $q$ have been computed. It repeatedly ``expands'' the nearest discovered (and not previously expanded) node to $q$ by adding its neighbors to the queue (Line 6). It does so until this nearest node triggers the termination condition in Line 5. The choice of termination condition leads to various versions of greedy search, including beam search and our new distance based \ourmethod{} method. In particular, we have: 
\begin{description}[style=unboxed,leftmargin=0cm]
    \item[Classic Greedy Search.] Terminate if there are at least:
    \begin{align}
    \label{eq:greedy_rule}
        k \text{ items } j_1, \ldots, j_k\in \mathcal{D} \text{ with } d(q, j_i) \leq d(q,x).
    \end{align}
    \item[Beam Search, with beam-width parameter $\mathbf{b \geq k}$.] Terminate if there are at least\footnote{\textit{Remark on implementation:} For beam search, it is easy to see that a node $x$ will always satisfy termination condition \eqref{eq:beam_rule} if it is not one of the closest $b$ neighbors to $q$ in $\mathcal{D}$. So, instead of maintaining two priority queues, it is more computationally efficient to maintain a sorted list of the $b$ closest nodes discovered so far. This is what is done in typical implementations of beam search \cite{SubramanyaDevvritKadekodi:2019}, and in our more detailed pseudocode in \Cref{sec:pseudo}.} :
    \begin{align}
    \label{eq:beam_rule}
        b \text{ items } j_1, \ldots, j_b\in \mathcal{D} \text{ with } d(q, j_i) \leq d(q,x).
    \end{align}
    \item[\ourmethod{} (our method) w/ parameter $\boldsymbol{\gamma}$.] Terminate if there are at least:
        \begin{align}
    \label{eq:our_rule}
        k \text{ items } j_1, \ldots, j_k\in \mathcal{D} \text{ with } (1+\gamma)  \cdot d(q, j_i) \leq d(q,x).
    \end{align}
\end{description}

The rule for greedy search is simple: we terminate if we have already found $k$ points closer to $q$ than the current candidate considered for expansion. For $k = 1$, it takes a moment to confirm that this criterion yields a method that is exactly equivalent to the more typical way of presenting greedy search: starting at $s$, move to the neighboring node nearest to $q$, terminating if there is no neighbor closer than the current node. 
For $k=1$, greedy search is known to converge to the exact nearest neighbor if there is some $x\in \{1, \ldots, n\}$ for which $d(x,q) = 0$ \emph{and} the search graph $G$ is navigable \cite{DiwanGouMusco:2024,Kleinberg:2000b,Milgram:1967}. However, no comparable guarantees hold for $k > 1$ or when $q$'s nearest neighbor is not at distance $0$, which is typical in practice. Moreover, greedy search performs poorly empirically, easily getting stuck in local minima  and failing to find good approximate nearest neighbors.

\subsection{Relaxing Greedy Search}\label{sec:relaxing}

The goal of beam search is to avoid such accuracy issues. It does so by relaxing the stopping criterion from greedy search: in particular, by \eqref{eq:beam_rule}, we only terminate if we have found $b \geq k$ nodes closer to the query $q$ than our current node $x$. When $b = k$, the algorithms are identical. When $b > k$, greedy search explores a prefix of the nodes explored by beam search, which simply terminates the search at a later point. Beam search is thus guaranteed to obtain a more accurate result than greedy search, at the cost of an increased number of distance computations.

With the above view in mind, many other relaxations of the greedy search termination condition given in \eqref{eq:greedy_rule} become apparent. In \eqref{eq:our_rule}, we introduce a slack parameter $\gamma \geq 0$ and only terminate if $x$ is further from $q$ than the $k^\text{th}$ best discovered point by a factor of $1+\gamma$. Setting $\gamma = 0$ recovers greedy search, and larger values of $\gamma$ will cause the search process to terminate later, yielding a better result, but at the cost of a higher runtime. This simple idea yields our \ourmethod{} procedure. 

While intuitively similar to beam search, a key difference of this ``distance based'' criteria is that it naturally adapts to the \emph{query difficulty}. For simplicity, consider the case of $k=1$. Greedy search tends to perform worse when there are many ``false nearest neighbors'' in a dataset. For example, suppose there is just one best point $x^* \in \{1, \ldots, n\}$ with $d(q,x^*) = 1$, but $m$ other points $y_1, \ldots, y_{m}$ with $d(q,y_i) = 1.01$. Unless we choose beam width $b  = \Omega(m)$, it is likely that more than $b$ points at distance $1.01$ will get added to $\mathcal D$, causing the search to terminate before finding $x^*$. 
In contrast, as long as $\gamma > .01$, \ourmethod{} will continue to search through all of the $y_i$ points before terminating. In contrast, Adaptive Beam Search will more quickly terminate search if it becomes apparent that all remaining candidates are too far away to be useful in finding addition nearest neighbors. Indeed, a criterion identical to \ourmethod{} has been suggested as an ``early stopping'' heuristic in work on practical graph-based nearest neighbor search \cite{ManoharShenBlelloch:2024,ManoharKimBlelloch:2025}.

\begin{figure*}[b]
\vspace{-.5em}
    \centering
        \begin{subfigure}[b]{0.325\linewidth}
        \centering
        \includegraphics[width=\linewidth]{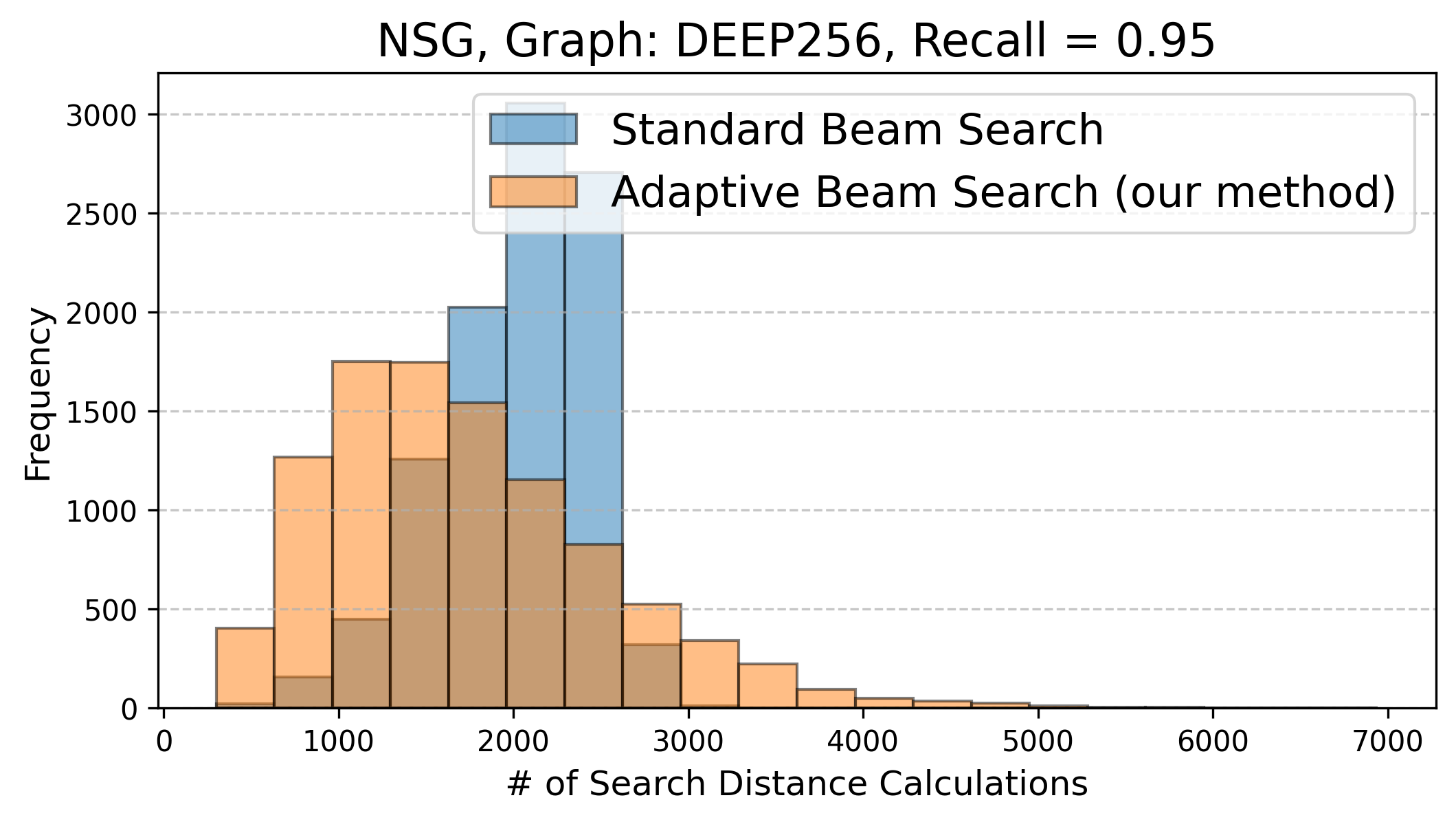}
        \label{fig:navig_histogram_1}
    \end{subfigure}
        \hfill
    \begin{subfigure}[b]{0.325\linewidth}
        \centering
        \includegraphics[width=\linewidth]{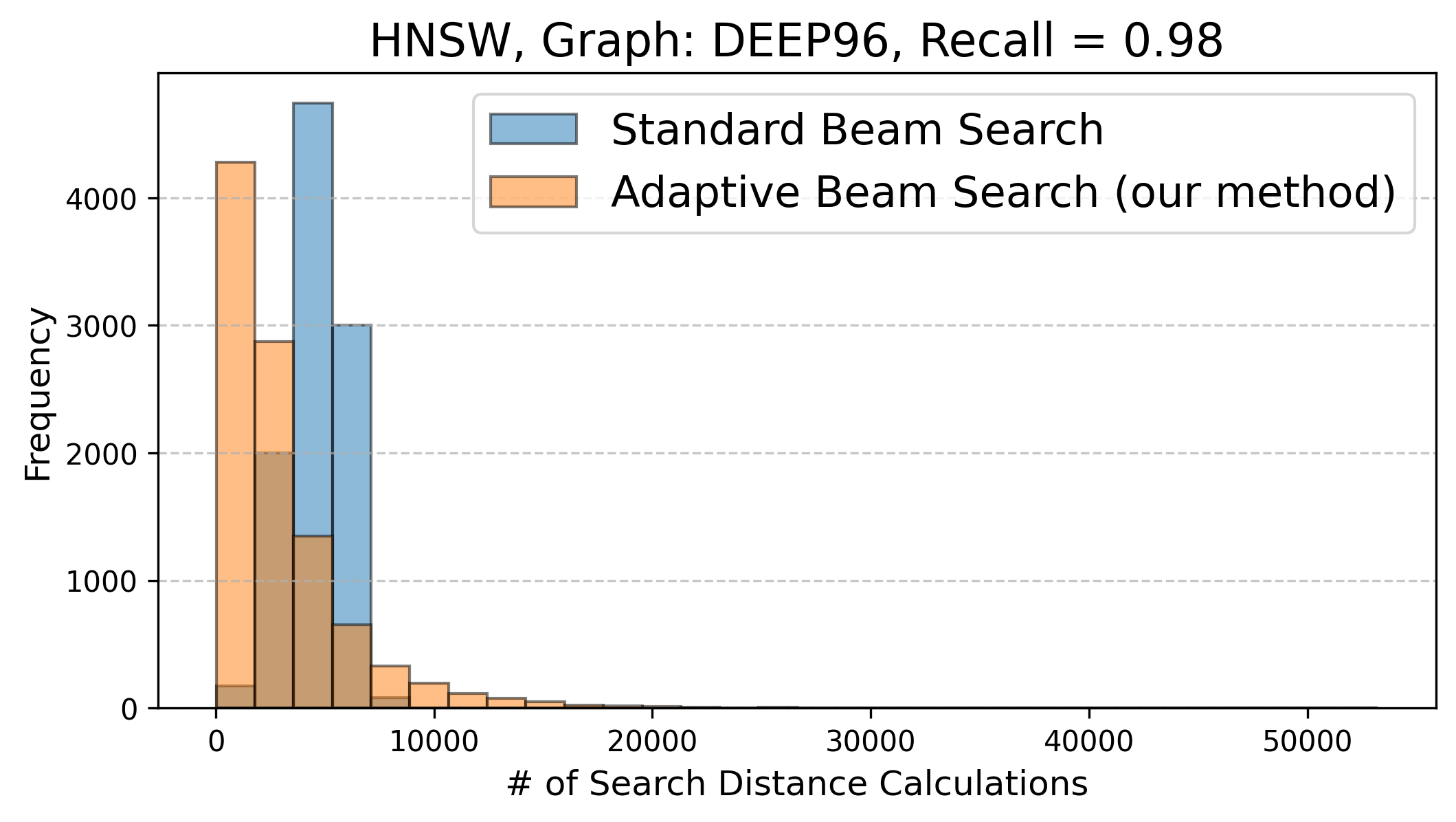}
        \label{fig:navig_histogram_1}
    \end{subfigure}
    \hfill
    \begin{subfigure}[b]{0.325\linewidth}
        \centering
        \includegraphics[width=\linewidth]{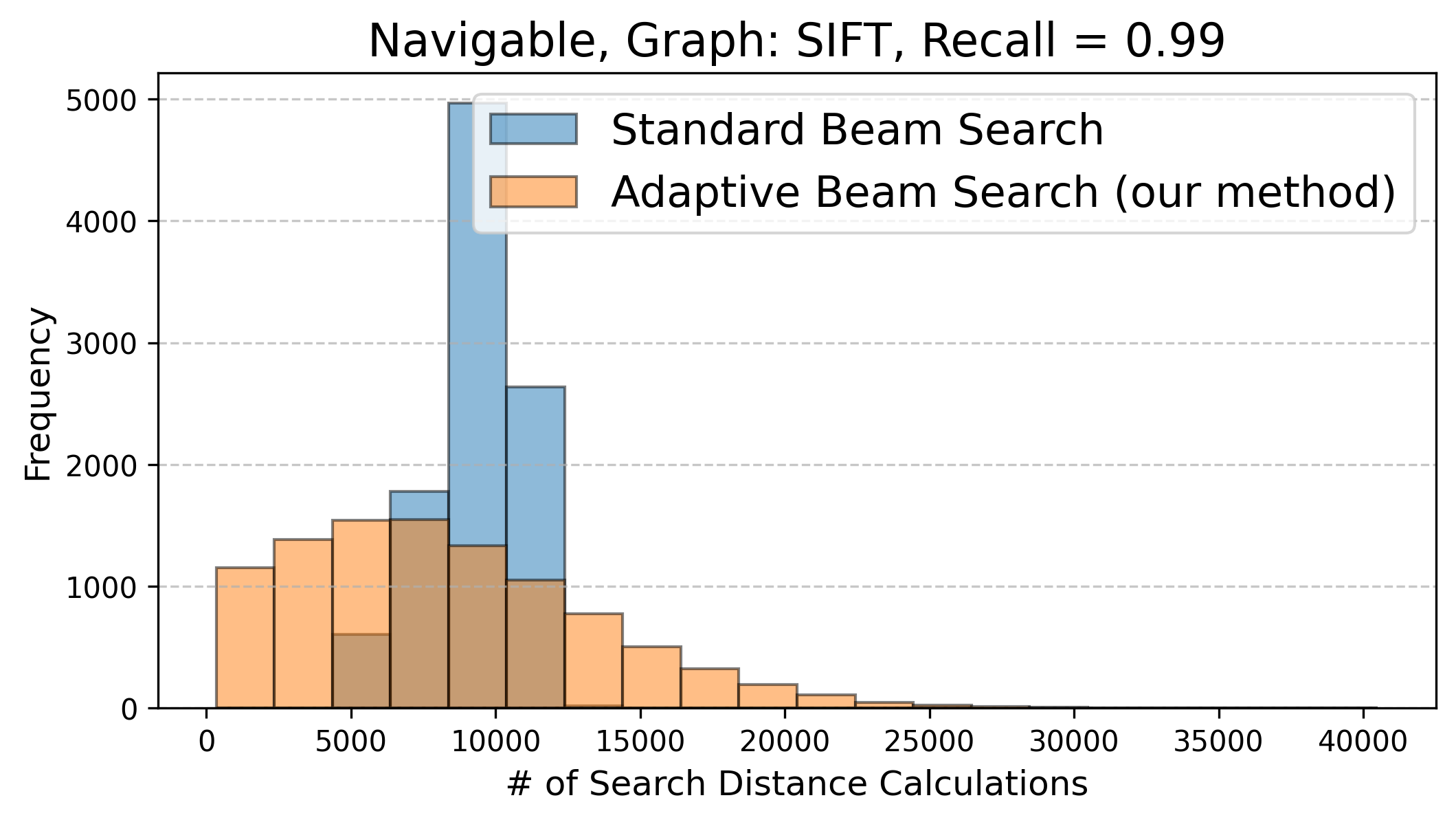}
        \label{fig:navig_histogram_bw}
    \end{subfigure}
    \vspace{-.5em}
    \caption{Histograms for the number of distance computations performed by standard beam search and our \ourmethod{} method when answering 10,000 queries for various datasets and search graphs (see \Cref{sec:experiments} for details). For a fair comparison, the $b$ parameter in beam search and $\gamma$ parameter in \ourmethod{} were tuned to achieve a fixed level of recall for the batch of queries. The histograms for \ourmethod{} are consistently flatter, confirming the intuition that it better adapts to query difficulty, leading to fewer distance computations on average.}
    \label{fig:intro_teaser}
\end{figure*}

The intuition that \ourmethod{} adapts to query hardness shows clearly in our experiments: as seen in \Cref{fig:intro_teaser}, the distribution of distance computations used by \ourmethod{} varies more widely, as fewer computations are used for ``easier'' queries.
As a result, across a variety of data sets and search graphs,  \ourmethod{} consistently outperforms classic beam search in terms of total  distance computations required to achieve a certain level of recall for a given query set.

\section{Theoretical Analysis}\label{sec:theory}

We support the improved empirical performance of \ourmethod{} with strong theoretically guarantees. Formally, we prove that the method is \emph{guaranteed} to solve the approximate nearest neighbor search problem, assuming that the search graph $G$ is navigable (\Cref{def:nav}):
\begin{theorem}
\label{thm:main}
Suppose $d$ is a metric on $\mathcal{X}$ and $G$ is navigable under $d$. Then for any query $q \in \mathcal{X}$, if \ourmethod{} -- i.e., \Cref{alg:gen_beam_search} with stopping criterion \eqref{eq:our_rule} -- is run with parameter $0 < \gamma \leq 2$, it is guaranteed to return a set of $k$ points $\mathcal{B}$ such that:
\begin{align*}
&\text{for all } v \in \{1,\ldots, n\} \setminus \mathcal{B}, & d(q, v) &\geq \frac{\gamma}{2} \max_{j\in \mathcal{B}} d(q, j).
\end{align*}
\end{theorem}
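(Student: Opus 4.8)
The plan is to track two scalars determined by the run of \Cref{alg:gen_beam_search} under rule \eqref{eq:our_rule}: the distance $r^* = d(q,x^*)$ of the node $x^*$ whose extraction triggers termination, and $R = \max_{j\in\mathcal{B}} d(q,j)$, the distance of the furthest returned point. First I would establish the inequality $r^* \geq (1+\gamma) R$. This is immediate from the stopping rule: when $x^*$ is popped, there are at least $k$ nodes $j\in\mathcal{D}$ with $(1+\gamma)\,d(q,j)\le r^*$, and since $\mathcal{B}$ is obtained as the $k$ closest nodes of $\mathcal{D}$, its furthest member satisfies $R\le r^*/(1+\gamma)$. (The degenerate case where the loop instead exits with $\mathcal{C}$ empty is handled separately: since navigability makes $G$ strongly connected from $s$, every node is reached, so $\mathcal{D}=\{1,\dots,n\}$, $\mathcal{B}$ is the exact $k$-NN, and every $v\notin\mathcal{B}$ lies in $\mathcal{D}\setminus\mathcal{B}$, making the claim trivial.)

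Next I would prove the structural heart of the argument, which I expect to be the main obstacle. The key preliminary observation (call it Claim~A) is that, because $x^*$ is the minimum of $\mathcal{C}$ at termination, every discovered-but-unexpanded node has distance $\ge r^*$ to $q$; hence \emph{every} discovered node at distance $< r^*$ has already been expanded, so all of its out-neighbors lie in $\mathcal{D}$. Building on this, the crucial lemma is: for any $v\notin\mathcal{D}$, the discovered node nearest to $v$, namely $u^\circ = \argmin_{u\in\mathcal{D}} d(u,v)$, must satisfy $d(q,u^\circ)\ge r^*$. The proof is by contradiction and is exactly where navigability enters: if $d(q,u^\circ)<r^*$, then $u^\circ$ is expanded by Claim~A, and applying \Cref{def:nav} to the pair $(u^\circ, v)$ (note $d(u^\circ,v)>0$ since $v\notin\mathcal{D}$) yields an out-neighbor $z$ with $d(z,v)<d(u^\circ,v)$; but $z\in\mathcal{D}$ because $u^\circ$ was expanded, contradicting the minimality of $u^\circ$. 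Intuitively, navigability guarantees the search can never stall near an undiscovered point without having first reached the termination radius $r^*$.

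Finally I would close the argument by a case split on $v\in\{1,\dots,n\}\setminus\mathcal{B}$ using the triangle inequality. If $v\in\mathcal{D}$, then $v$ is not among the $k$ closest of $\mathcal{D}$, so $d(q,v)\ge R\ge \tfrac{\gamma}{2}R$, where the last step uses $\gamma\le 2$. If $v\notin\mathcal{D}$, let $j^*\in\mathcal{B}$ attain $d(q,j^*)=R$; minimality of $u^\circ$ gives $d(u^\circ,v)\le d(j^*,v)\le R + d(q,v)$, while the lemma and triangle inequality give $d(q,v)\ge d(q,u^\circ)-d(u^\circ,v)\ge r^* - R - d(q,v)$. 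Substituting $r^*\ge(1+\gamma)R$ yields $2\,d(q,v)\ge \gamma R$, i.e. $d(q,v)\ge \tfrac{\gamma}{2}\max_{j\in\mathcal{B}}d(q,j)$, as required. The only genuinely delicate step is the structural lemma of the previous paragraph; once Claim~A is in hand, navigability does the real work, and everything else is metric bookkeeping.
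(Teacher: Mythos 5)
Your proof is correct, and the structural heart of it differs from the paper's in a way worth noting. Both arguments share the same skeleton: the termination bound $r^*\ge(1+\gamma)R$, the observation that every discovered-but-unexpanded node is at distance at least $r^*$ from $q$ (your Claim~A, which is the paper's inequality \eqref{eq:c_condition}), the identification of a ``frontier'' node near the undiscovered point $v$, and the same final triangle-inequality bookkeeping. Where you diverge is in how navigability produces that frontier node. The paper anchors at $\tilde{x}=\argmax_{j\in\mathcal B}d(q,j)$, first proves $\tilde x$ is expanded (\Cref{clm:expanded}), then invokes navigability to build an entire monotone path from $\tilde x$ to $v$ and runs an induction along it to locate a discovered-but-unexpanded node $w$ with $d(w,v)<d(\tilde x,v)$ (\Cref{clm:w}). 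You instead take $u^\circ=\argmin_{u\in\mathcal D}d(u,v)$ and apply \Cref{def:nav} exactly once: if $u^\circ$ were expanded, its navigable out-neighbor toward $v$ would be a discovered node strictly closer to $v$, contradicting minimality. This extremal argument is shorter, avoids the path induction and the need for \Cref{clm:expanded} entirely (you only need $j^*\in\mathcal D$, which is trivial), and isolates cleanly what navigability buys: the discovered node nearest to any undiscovered point must lie on the unexpanded frontier. The paper's path-based version is more geometric (it supports the ball-containment picture in \Cref{fig:proof example}) and mirrors the style of path arguments used for $\alpha$-shortcut reachability, but your route reaches the same bound $2\,d(q,v)\ge r^*-R\ge\gamma R$ with less machinery. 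All the delicate points check out: $d(u^\circ,v)>0$ follows from the uniqueness assumption since $v\notin\mathcal D\ni u^\circ$, and the case $v\in\mathcal D\setminus\mathcal B$ is handled correctly using $\gamma\le 2$.
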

Notably, setting $\gamma = 2$, we ensure that all points not returned by the algorithm are at least as far from $q$ as \emph{every point} in $\mathcal{B}$. Thus, \textbf{for $\gamma = 2$, \ourmethod{} on a navigable graph is guaranteed to \emph{exactly} solve the $k$-nearest neighbor problem.} For smaller  $\gamma$, the method obtains an approximate solution: no point in $\mathcal{B}$ can be further from $q$ than any point not returned by more than a $2/\gamma$ factor.\footnote{Many existing theoretical guarantees for approximate nearest neighbor search, such as those for LSH and related methods \cite{AndoniIndyk:2008,AndoniIndykNguyen:2014,IndykMotwani:1998} focus on the case of $k = 1$. A rephrasing of our result in this case is that \ourmethod{} returns an approximate nearest neighbor $\tilde{x}$ with $
d(q,\tilde{x}) \leq \frac{2}{\gamma} \cdot  \min_{x\in \{1, \ldots, n\}}d(q,x).$}

 We can see that
\Cref{thm:main} proves a trade-off between runtime and accuracy: smaller values of $\gamma$ lead to a strictly faster algorithm (since termination is earlier) but a worse approximation guarantee. 
While our result falls short of proving worst-case runtime guarantees, to the best of our knowledge, it is the first result linking the accuracy of a natural greedy search method to the notion of graph navigability.
Importantly we note that, unlike our Adaptive Beam Search, a result like \Cref{thm:main} \emph{cannot be proven} for standard beam search. In particular, in \Cref{sec:beamFailure} we prove:

\begin{claim}
    Standard beam search with beam width $b \le n-3$ fails to approximately solve the nearest neighbor search problem on navigable graphs for any finite approximation factor. 
    \end{claim}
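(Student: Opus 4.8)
The plan is to exhibit, for any target beam width $b$ and any prescribed finite approximation factor $F$, a navigable graph on $n = b+3$ nodes together with a query $q$ on which standard beam search of width $b \le n-3$ returns points more than $F$ times farther from $q$ than the true nearest neighbor (additional decoys only reinforce the failure, so larger $n$ is fine). The idea is to isolate the planted nearest neighbor behind a single ``gateway'' node that beam search never expands, exploiting the fact that beam search orders its exploration by distance to $q$ whereas navigability only controls distances \emph{among database nodes}.

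\textbf{Construction.} Designate three special nodes $s$ (start), $x^*$ (planted nearest neighbor), and $g$ (gateway), plus $n-3 = b$ decoys $d_1,\dots,d_b$. Fix a small $\epsilon>0$ and a tiny $\eta<\epsilon$, and choose distances so that: (i) $d(q,x^*)=\epsilon$ while every other database node lies at distance $\approx 1$ from $q$; (ii) $g$ is the \emph{unique} closest database node to $x^*$, with $d(g,x^*)=1$ and $d(w,x^*)>1$ for all $w\notin\{g,x^*\}$; (iii) $s$ and all decoys are strictly closer to $q$ than $g$, e.g.\ at distance $1-\eta$ with $d(q,g)=1$; and (iv) the $n-1$ nodes other than $x^*$ form a tight cluster of mutual distance $<1$. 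Perturb all values to be distinct. One must check these satisfy the triangle inequality; the binding triples involve $q$, $x^*$, and a cluster node, and they hold because $q$ and $x^*$ are within $\epsilon$ of each other while both sit at distance $\approx 1$ from the cluster. For the edge set, take the complete directed graph and delete every edge into $x^*$ \emph{except} $g\to x^*$.

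\textbf{Navigability.} I would verify \Cref{def:nav} directly. For target $x^*$: the gateway works via $g\to x^*$, and any other node $w$ works via $w\to g$, since $d(g,x^*)=1<d(w,x^*)$. For any target $y\ne x^*$: because $x^*$ lies at distance $>1$ from every cluster node while the cluster has diameter $<1$, the node $x^*$ is \emph{never} the closest database node to any $y$; hence the closest node $c_y$ to $y$ is some non-$x^*$ node, and $c_y$ remains an out-neighbor of every node (we deleted only edges \emph{into} $x^*$), providing the required improving move. The only delicate point is precisely this ``$x^*$ is never nearest to anyone'' fact, which is exactly what licenses deleting the in-edges to $x^*$ without breaking navigability; pinning it down cleanly is the crux of the argument, and it reduces to the cluster-diameter inequality in (iv).

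\textbf{Failure of beam search.} Beam search starts at $s$ and expands it, discovering all decoys and $g$ but not $x^*$ (the edge $s\to x^*$ was deleted). Since $s$ and the $b$ decoys are the $n-2=b+1$ nodes strictly closer to $q$ than $g$, and the beam retains only the $b$ nearest discovered nodes, $g$ is never among the top $b$; by the termination rule \eqref{eq:beam_rule}, the search halts before $g$ is ever extracted and expanded. Expanding any cluster node only rediscovers cluster nodes, so $x^*$ is never inserted into $\mathcal{D}$ and cannot appear in the returned set $\mathcal{B}$. Every point of $\mathcal{B}$ therefore lies at distance $\ge 1-\eta$ from $q$, while $x^*\notin\mathcal{B}$ has $d(q,x^*)=\epsilon$, giving an approximation factor of at least $(1-\eta)/\epsilon$. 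Taking $\epsilon$ small enough makes this exceed any finite $F$, proving the claim. I expect the beam-search dynamics to be immediate once the in-edges to $x^*$ are restricted to the single out-of-beam gateway; the real work is the navigability bookkeeping together with confirming the distances form a valid metric.
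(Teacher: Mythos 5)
Your proof is correct and rests on the same core mechanism as the paper's: the true nearest neighbor is reachable only through a single ``gateway'' node that is farther from $q$ than $b+1$ decoys, so beam search with width $b \le n-3$ terminates before the gateway is ever expanded and never discovers the planted point. The only real difference is in the realization: the paper instantiates the idea with explicit points in $2$-dimensional Euclidean space ($\bv x_2=(1,1)$ as the gateway to $\bv x_3=(m,1)$, with $n-3$ decoys near $(1,0)$) and a sparse navigable graph, which makes the metric and navigability checks immediate, whereas you use an abstract metric and a near-complete graph and leave the triangle-inequality verification partly implicit --- it does go through, since your constraints $1 < d(w,x^*) \le d(q,w)+d(q,x^*) = 1-\eta+\epsilon$ are consistent precisely because you chose $\eta < \epsilon$.
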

    
    Concretely, for any finite $C$, we can construct a set of $n$ points in $2$-dimensional Euclidean space and a navigable graph $G$ such that, for some query point $q$, beam search run on $G$ with beam width $b \le n-3$ returns $\tilde{x}$ with $\displaystyle d(q,\tilde{x}) \geq C\cdot  \min_{x\in \{1, \ldots, n\}}d(q,x)$. 
\begin{proof}[Proof of \Cref{thm:main}]
  Our proof will use the terms ``discovered'' and ``expanded'' to identify nodes in $\{1, \ldots, n\}$. We consider a node $j$ ``discovered'' if $j\in \mathcal{D}$ when \Cref{alg:gen_beam_search} terminates; i.e., we have evaluated the distance between $j$ and $q$. We consider a node $j$ ``expanded'' if $j$ is discovered and, at some point, was both popped off $\mathcal{C}$ on Line 4 \emph{and} did not cause the termination condition on Line 5 to be triggered. This ensures that all of its out-neighbors are discovered (see Line 7). 
  
  Note that all discovered nodes are added to both $\mathcal D$ and $\mathcal{C}$. Formally, if the algorithm terminates because the condition is true for some $x_{term}$, then $\mathcal{C}\cup \{x_{term}\}$ is the set of discovered but not yet expanded nodes, so the set of expanded nodes is $\mathcal{D}\setminus (\mathcal{C}\cup \{x_{term}\})$. 

 Let $\mathcal{B}$ be the set of nodes returned upon termination and let $\tilde{x} = \argmax_{x\in \mathcal{B}} d(q,x)$ be the $k^\text{th}$ furthest point from $q$ in that set. Since $G$ is navigable, and since we assume data points are unique, there must be a path in $G$ from any node $x$ to any other node $y$ (consisting of nodes that get monotonically closer to $y$); i.e., $G$ is strongly connected. Thus, if \Cref{alg:gen_beam_search} terminates because an empty queue $\mathcal C$ causes the while loop to terminate, then all nodes in the graph must have been discovered, and so $\mathcal B$ contains the exact $k$ nearest neighbors to $q$, and the theorem holds immediately.

Thus, it suffices to consider the case when termination occurs because some node $x_{term}$ causes the termination condition in Line 5 to evaluate to true and the while loop to break early. We first claim:
\begin{claim}\label{clm:expanded}
When \Cref{alg:gen_beam_search} terminates, $\tilde{x}$ is guaranteed to have been expanded.
\end{claim}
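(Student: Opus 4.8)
The plan is to prove the statement directly by a distance comparison: I will show that every node left \emph{unexpanded} at termination lies strictly farther from $q$ than $\tilde{x}$, so $\tilde{x}$ cannot be among the unexpanded nodes and hence must have been expanded. Throughout I work in the case already isolated above, where termination is triggered by some node $x_{term}$ satisfying the Line~5 condition \eqref{eq:our_rule} (the empty-queue case was handled separately).

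First I would pin down the distances of the unexpanded nodes. As noted in the proof preamble, the unexpanded nodes at termination are exactly $\mathcal{C}\cup\{x_{term}\}$, where $\mathcal{C}$ is the state of the candidate queue immediately after $x_{term}$ is popped on Line~4. Since $\mathcal{C}$ is a min-priority queue keyed by distance to $q$ and $x_{term}$ was extracted as its minimum, every node remaining in $\mathcal{C}$ has distance at least $d(q, x_{term})$; and since the algorithm breaks immediately, no further nodes are inserted. Hence every unexpanded node $u$ satisfies $d(q,u)\geq d(q,x_{term})$, i.e., $x_{term}$ is the closest unexpanded node to $q$.

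Next I would extract the key consequence of the termination condition. Because \eqref{eq:our_rule} holds for $x_{term}$, there are $k$ nodes $j_1,\dots,j_k\in\mathcal{D}$ with $(1+\gamma)\,d(q,j_i)\le d(q,x_{term})$. Using $\gamma>0$, this gives $d(q,j_i)\le \tfrac{1}{1+\gamma}\, d(q,x_{term}) < d(q,x_{term})$ for each $i$, so these $k$ discovered nodes are all strictly closer to $q$ than every unexpanded node. Since $\tilde{x}$ is the $k$-th closest node in $\mathcal{D}$ (the farthest point of $\mathcal{B}$), its distance is at most the largest of these $k$ witness distances, so $d(q,\tilde{x})\le \max_i d(q,j_i) < d(q,x_{term})$. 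Combining with the previous paragraph yields $d(q,\tilde{x}) < d(q,x_{term}) \le d(q,u)$ for every unexpanded $u$, so $\tilde{x}$ is strictly closer than all unexpanded nodes and therefore was expanded.

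The only subtle point, and the step I would be most careful about, is the queue bookkeeping in the first paragraph: one must confirm that the priority-queue invariant genuinely certifies $x_{term}$ as the minimum-distance \emph{unexpanded} node, and that the decoupling of $\mathcal{D}$ from $\mathcal{C}$ means no closer node is hiding among the unexpanded set. Everything else is a short distance comparison that hinges on the strict inequality supplied by $\gamma>0$.
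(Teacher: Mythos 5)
Your proof is correct and follows essentially the same route as the paper's: both derive $d(q,\tilde{x}) < d(q,x_{term})$ from the termination condition \eqref{eq:our_rule} and then invoke the min-priority-queue ordering to conclude that $\tilde{x}$ must have been popped (and, not having triggered termination, expanded) before $x_{term}$. Your contrapositive phrasing --- characterizing the unexpanded set as $\mathcal{C}\cup\{x_{term}\}$ and lower-bounding its distances --- is just a slightly more explicit rendering of the paper's one-line argument.
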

  To see that this claim holds note that, by termination condition \eqref{eq:our_rule}, it must be that $d(q,x_{term}) \geq (1+\gamma) d(q,\tilde{x})$ and thus $d(q,x_{term}) > d(q,\tilde{x})$.\footnote{The strict inequality clearly holds when $d(q,\tilde{x}) > 0$ since $\gamma > 0$. When $d(q,\tilde{x}) = 0$ it holds because database items are assumed to be unique, so we cannot also have $d(q,{x}_{term}) = 0$.} I.e., $\tilde{x}$ is closer to $q$ then $x_{term}$. Thus, $\tilde{x}$ must have already been popped off $\mathcal{C}$ and expanded before $x_{term}$ was popped off $\mathcal{C}$.

With \Cref{clm:expanded} in place, we can get into our main proof. Our goal is to prove that for all $z\notin \mathcal{B}$,
\begin{align}
  \label{eq:to_prove_k1}
d(q,z) \geq  \frac{\gamma}{2} d(q,\tilde{x}). 
\end{align}
It suffices to prove the claim for all undiscovered nodes $z\notin \mathcal{D}$, since if $z\in \mathcal{D}$ and $d(q,z) < \frac{\gamma}{2} d(q,\tilde{x})$, then $z$ is closer to $q$ than $\tilde{x}$ and would have clearly been included in $\mathcal{B}$ (recall that $\gamma \leq 2$). 

Now, suppose by way of contradiction that \eqref{eq:to_prove_k1} is not true, i.e., that there is some undiscovered node $z\notin \mathcal{D}$ with $d(q,z) <  \frac{\gamma}{2} d(q,\tilde{x})$. We first observe that such a $z$ cannot be an out neighbor of $\tilde{x}$: since $\tilde{x}$ is expanded by \Cref{clm:expanded}, all of its neighbors are discovered, i.e., all are in $\mathcal{D}$. 

Since $G$ is navigable and all database items are unique, there must be some directed path $\mathcal{P}$ from $\tilde{x}$ to $z$ consisting of points that get monotonically closer to $z$. Moreover, since $z\notin \mathcal{N}_G(\tilde{x})$, $\mathcal{P}$ must have length $\ell \geq 2$. Denote the elements of $\mathcal{P}$ by $
  \mathcal{P} = \{\tilde{x} = p_0 \rightarrow p_1 \rightarrow \ldots \rightarrow p_\ell = z\}.$ We have for all $1 \leq i \leq \ell$, $d(z, p_{i-1}) > d(z, p_{i})$. 
We make the following claim:
\begin{claim}\label{clm:w} For any $z \notin \mathcal D$, there exists some node $w \in \{p_1, \ldots, p_{\ell-1}\}$ along the path from $\tilde x$ to $z$ that has been {discovered} but not {expanded}.
\end{claim}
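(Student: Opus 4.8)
The plan is to lean on a single structural fact about \Cref{alg:gen_beam_search}: whenever a node is \emph{expanded}, Line 7 inserts \emph{all} of its out-neighbors into $\mathcal{D}$, so every out-neighbor of an expanded node is necessarily discovered. The path $\mathcal{P} = \{\tilde{x} = p_0 \rightarrow \cdots \rightarrow p_\ell = z\}$ begins at a node that \emph{is} expanded (this is exactly \Cref{clm:expanded}) and ends at a node $z = p_\ell$ that is \emph{not} discovered. Hence ``discoveredness'' cannot propagate all the way down the path, and the proof will simply locate the first index where the propagation breaks down; the node just before that break will be the witness $w$.

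Concretely, I would first record the two boundary facts. At the left end, since $\tilde{x} = p_0$ is expanded and $p_1 \in \mathcal{N}_G(\tilde{x})$, the node $p_1$ is discovered. At the right end, $p_\ell = z \notin \mathcal{D}$ by hypothesis. I would then let $i$ be the \emph{smallest} index with $p_i \notin \mathcal{D}$; this index exists because $p_\ell$ is undiscovered, and it satisfies $i \geq 2$ because $p_1$ is discovered. Consequently $p_{i-1}$ is discovered (by minimality of $i$) and the index $i-1$ lands in the admissible range $\{1, \ldots, \ell-1\}$. Finally, I would argue that $w = p_{i-1}$ is discovered but not expanded: it is discovered by the choice of $i$, and it cannot have been expanded, for otherwise its out-neighbor $p_i$ would have been added to $\mathcal{D}$ on Line 7, contradicting $p_i \notin \mathcal{D}$. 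This exhibits a node $w \in \{p_1, \ldots, p_{\ell-1}\}$ that is discovered but not expanded, as claimed.

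I do not anticipate a genuine difficulty in this argument; the only care required is in the boundary bookkeeping. The two places to be attentive are (i) ensuring the minimal undiscovered index $i$ exists at all, which is precisely where the hypothesis $z \notin \mathcal{D}$ is used, and (ii) ensuring the witness lands \emph{strictly} between the endpoints, i.e.\ that $i \geq 2$, which is exactly where \Cref{clm:expanded} enters (through the discovery of $p_1$ as an out-neighbor of the expanded $\tilde{x}$). Getting both endpoints right is what guarantees $w$ falls in the interior set $\{p_1, \ldots, p_{\ell-1}\}$ rather than degenerating to $p_0$ or $p_\ell$.
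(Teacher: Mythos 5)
Your proof is correct and follows essentially the same route as the paper's: both rest on the facts that $p_1$ is discovered because $\tilde{x}$ is expanded, that expansion of $p_{i-1}$ forces discovery of $p_i$, and that $z \notin \mathcal{D}$ blocks the propagation. The paper phrases this as an inductive dichotomy (either some interior node is discovered-but-unexpanded, or all are expanded, the latter contradicting $z \notin \mathcal{D}$), while you extract the witness as the predecessor of the first undiscovered index; these are the same argument in different packaging.
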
 
\begin{proof}
First observe that $p_1$ must be discovered since, by  \Cref{clm:expanded}, $\tilde{x}$ was expanded and $p_1$ is an out-neighbor of $\tilde{x}$. Furthermore, if $p_{i-1}$ is discovered \emph{and} expanded then $p_{i}$ must be discovered. So, inductively we see that there are two possible cases: either there is some $i < \ell$ for which $p_{i}$ is discovered but not expanded (as desired) or $p_i$ is discovered \emph{and} expanded for all $i < \ell$. However, the second case is impossible since $z$ is not in $\mathcal{D}$ and it would be if $p_{\ell-1}$ was expanded. We conclude the claim that there is some $w \in \{p_1, \ldots, p_{\ell-1}\}$ that is {discovered} but not {expanded}. 
\end{proof}

Consider the unexpanded node $w$ guaranteed to exist by \Cref{clm:w}. 
When the algorithm terminates, it must be that: 
\begin{align}
  \label{eq:c_condition}
d(q, w) \geq (1+\gamma) d(q, \tilde{x}).
\end{align}
If $w = x_{term}$ this is trivially true as a consequence of the termination rule \eqref{eq:our_rule}. Otherwise, if \eqref{eq:c_condition} were not true, then $w$ would be closer to $q$ than $x_{term}$ and it would have been popped off $\mathcal{C}$ before $x_{term}$ and expanded. With \eqref{eq:c_condition} in place, we are ready to obtain our contradiction.
By triangle inequality (since $d$ is a metric) and our supposition that $d(q,z) <  \frac{\gamma}{2} d(q,\tilde{x})$, we have:
\begin{align*}
 d(\tilde{x},z) \leq  d(\tilde{x},q) + d(q,z) <  \left(1+ \frac{\gamma}{2}\right)d(q,\tilde{x}).
\end{align*}
Combined with another application of triangle inequality and the fact the $d(w,z) < d(\tilde{x},z)$, we have
\begin{align*}
  d(w,q) \leq d(w,z) + d(z,q) &< d(\tilde{x},z) + d(z,q) 
  < \left(1+ \frac{\gamma}{2}\right)d(q,\tilde{x}) + \frac{\gamma}{2}d(q,\tilde{x})  = (1+\gamma)d(q,\tilde{x}).
\end{align*}
However, this claim contradicts \eqref{eq:c_condition}. Thus, there cannot exist any $z\notin \mathcal{D}$ with $d(q,z) <  \frac{\gamma}{2} d(q,\tilde{x})$. I.e., \eqref{eq:to_prove_k1} holds, proving \Cref{thm:main}. For a geometric illustration of the above proof, see Fig. \ref{fig:proof example}.
\end{proof}
\begin{figure}[hbtp!]
\floatbox[{\capbeside\thisfloatsetup{capbesideposition={left,center},capbesidewidth=6.8cm}}]{figure}[\FBwidth]
{\caption{Visualization of the proof of \Cref{thm:main}. We let $\tilde{d}$ denote $d(q,\tilde{x})$. Our goal is to show that there is no undiscovered $z$ in a ball of radius $\frac{\gamma}{2}\tilde{d}$ around $q$, which is shown with a dotted line. 
If there was, we obtain a contradiction. In particular, if $G$ is navigable, we argued that there must be some unexpanded node $w$ on a path of decreasing distance from $\tilde{x}$ to $z$. Since $w$ is closer to $z$ than $\tilde{x}$, it must lie in a ball of radius $\left(1+\frac{\gamma}{2}\right)\tilde{d}$ around $z$, which is contained in a ball of radius $(1+\gamma)$ around $q$. However, by \eqref{eq:c_condition}, no unexpanded node can lie in that ball.}\label{fig:proof example}}
{\hspace{2em}\begin{tikzpicture}[scale=0.65]
    \def\rSmall{2}
    \def\rLarge{3}
    \def\rInner{1.5}
    \def\rIntermediate{2.5} 
    \def\rOuter{4.5} 

    \def\shiftY{0.7}

    \draw[densely dotted] (0,\shiftY) circle (\rInner);
    \draw (0,\shiftY) circle (\rSmall);

    \coordinate (q) at (0,\shiftY); 
    \coordinate (xt) at (0,\shiftY - \rSmall); 
    \coordinate (z) at (135:\rInner*0.9); 
    \coordinate (w) at (150:\rIntermediate); 

    \draw (z) circle (\rLarge);

    \draw (q) circle (\rOuter);

    \fill (q) circle (1pt) node[above left, yshift=-5pt, xshift=2pt] {$q$};
    \fill (xt) circle (1pt) node[below] {$\tilde{x}$};
    \fill (z) circle (1pt) node[above left, yshift=-4pt, xshift=2pt] {$z$};
    \fill (w) circle (1pt) node[above left, yshift=-4pt, xshift=2pt] {$w$};

    \draw (q) -- node[sloped, above] {$\frac{\gamma}{2} \tilde{d}$} (\rInner, \shiftY);

    \draw (z) -- node[sloped, above, pos=0.5] {$\left(1+\frac{\gamma}{2}\right)\tilde{d}$} +(45:\rLarge);

    \draw[densely dotted] (z) -- (w);
    \draw[densely dotted] (w) -- (xt);
    \draw (xt) -- node[sloped, above] {$\tilde{d}$} (q);

    \draw (q) -- node[sloped, above, pos=0.75] {$(1+\gamma)\tilde{d}$} ++(-45:\rOuter);
\end{tikzpicture}
}
\end{figure}
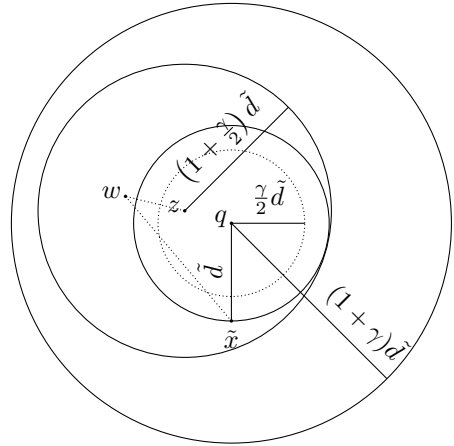

\section{Experiments}\label{sec:experiments}
We now experimentally compare our \ourmethod{} method with standard beam search, demonstrating improved tradeoffs between efficiency and accuracy in a variety of settings. 

\subsection{Experimental Setup} 
\begin{table}[b]
    \caption{Datasets used for evaluation. For further details, refer to \Cref{sec:expDetails}.}
    \label{tab:datasets}
    \smallskip
    \centering
    {
    \begin{tabular}{lcccc}
        \toprule
        \textbf{Dataset} & \textbf{Dimensions} & \textbf{\# of Nodes} & \makecell{ \textbf{\# of Nodes in Navigable} \\ \textbf{Graph Experiments}} & \textbf{\# Query Points} \\
        \midrule
        MNIST \cite{lecun1998gradient}  & 784  & 60K  & 50K & 10K  \\
        SIFT1M \cite{JegouDouzeSchmid:2011} & 128  & 1M  & 100K & 10K \\    
        DEEP96
        \cite{babenko2016efficient}  & 96   & 1M & 100K  &10K\\
        DEEP256 \cite{LiYanZhang:2018} & 256  & 1M  & - & 10K  \\
        GloVe \cite{pennington2014GloVe}  & 200  & 1M & - & 10K  \\
         GIST \cite{oliva2001gist}  & 960  & 1M  & - & 10K  \\

        \bottomrule
    \end{tabular}
    }
\end{table}

\noindent\textbf{Beam Search Algorithms.}
We primarily compare standard beam search (termination condition \eqref{eq:beam_rule}) with \ourmethod{} (termination condition \eqref{eq:our_rule}). To implement \Cref{alg:gen_beam_search} with these termination conditions, we follow the pseudocode in \Cref{sec:pseudo}. For  some settings, we test a third method called \emph{\ourmethod{} V2}, which terminates on node $x$ if
\begin{align}\label{eq:v2}
d(q,x) \ge d_1 + \gamma \cdot d_k,
\end{align}
where $d_1$ and $d_k$ are the distances from the query $q$ to the closest and $k^\text{th}$ closest discovered nodes, respectively. Compared to  \eqref{eq:our_rule},  \eqref{eq:v2} replaces  the threshold $(1+\gamma) \cdot d_k$ with the smaller threshold $d_1 + \gamma \cdot d_k$, leading to more aggressive stopping. Surprisingly, while \eqref{eq:v2} is not a relaxation of greedy search (when $\gamma < 1$, it may stop earlier than greedy search), one can check that \Cref{thm:main} still holds under this condition. This motivates its inclusion in our experiments. However, we observe that \ourmethod{} V2 generally underperforms \ourmethod{}. We leave open developing other stopping conditions that satisfy  bounds similar to \Cref{thm:main} while obtaining strong empirical performance like \ourmethod{} -- see \Cref{sec:hybrid_rule} for some initial explorations.

\noindent\textbf{Comparison Across Recall Values.}
The algorithms discussed above can all trade off accuracy for runtime by adjusting the beam width, $b$, or the parameter $\gamma$. We thus vary these parameters to obtain a range of \emph{recall values}, i.e., the average fraction of the $k$ nearest neighbors found over all queries on a given dataset. Recall is a standard metric for evaluating ANN methods \cite{MalkovYashunin:2020,SubramanyaDevvritKadekodi:2019}.  We compare the methods by plotting the average number of distance computations performed per query to achieve a certain recall value. Since all three methods have essentially identical implementations, running time scales very similarly with the number of distance computations. See \Cref{sec:pseudo} for more details.

\noindent\textbf{Datasets and Graph Constructions.}
We evaluate our \ourmethod{} on six standard benchmark datasets for nearest neighbor search, which are listed in \Cref{tab:datasets}. All datasets consist of real-valued vectors in varying dimensions, and we use Euclidean distance for search. We perform evaluations using a variety of popular heuristic ``approximately navigable'' graphs, along with truly navigable graphs for which the bound of \Cref{thm:main} holds. Specifically, for the heuristic graphs, we use four standard methods: HNSW \cite{MalkovYashunin:2020},  Vamana \cite{SubramanyaDevvritKadekodi:2019}, NSG \cite{FuXiangWang:2019}, and EFANNA \cite{fu2016EFANNA}. Details on how parameters are set for these algorithms are in \Cref{sec:expDetails}. 

To construct the truly navigable graphs, we use the approach of \cite{DiwanGouMusco:2024} to create an initial navigable graph with average degree $O(\sqrt{n \log n})$, and then further prune this graph while maintaining navigability. See \Cref{sec:prune} for details. Pruning reduces the memory footprint of the graph, and results in levels of sparsity closer to that of the heuristic constructions. However, since it is computationally expensive, we only run our navigable graph experiments for random subsets of three of the datasets, with subsample sizes listed in \Cref{tab:datasets}. We believe that our subsample sizes are large enough to be representative. However, it would be interesting to improve the running time of constructing very sparse and truly navigable graphs, so that such graphs can be evaluated for larger datasets. 

\begin{figure*}[t]  
    \centering
    \includegraphics[width=\linewidth]{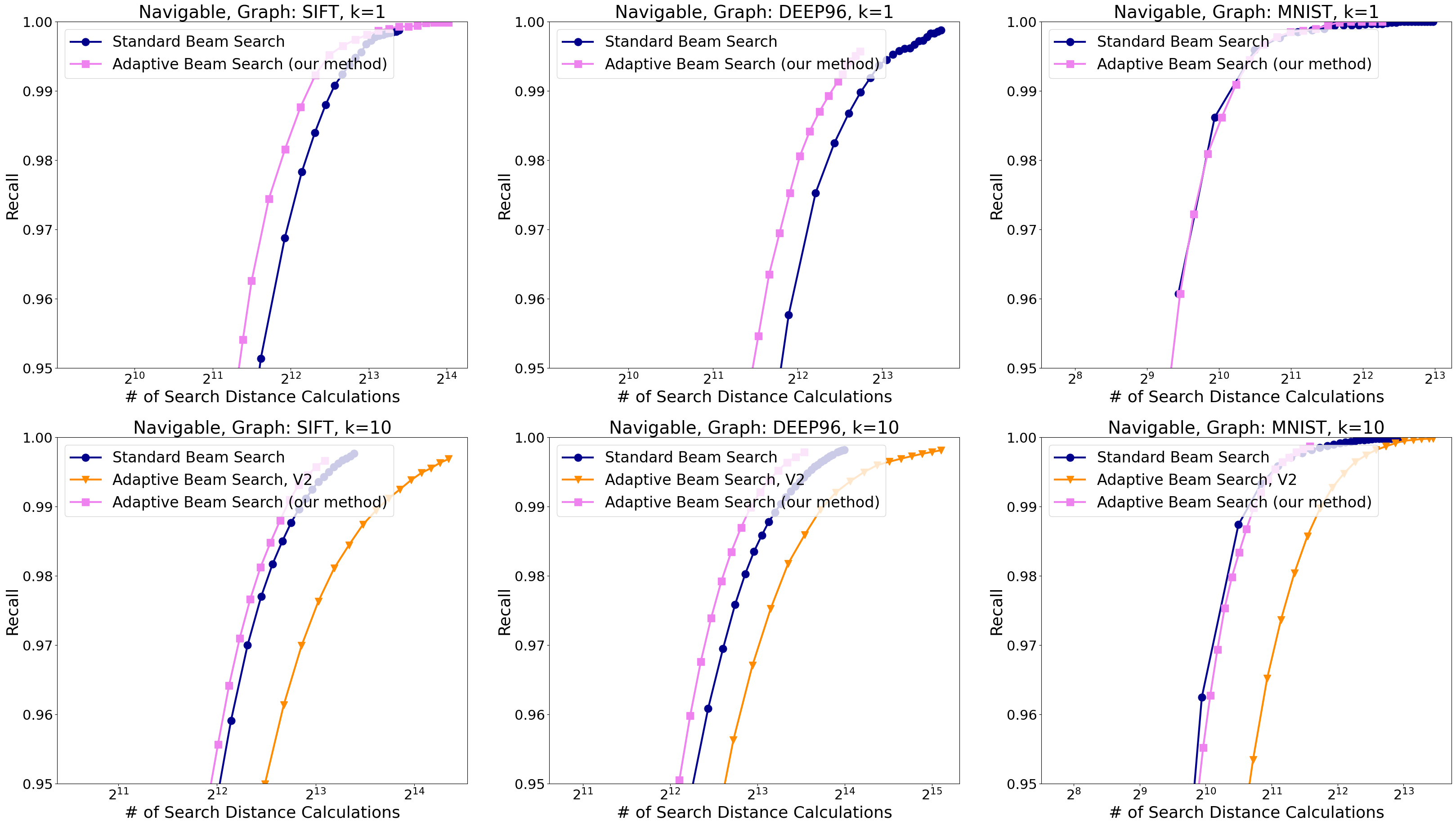}
    \caption{{\bf Navigable Graphs:} Comparison of generalized beam search termination conditions on navigable graphs across three datasets: SIFT1M, DEEP96, and MNIST (columns), with \( k = 1 \), and \( k = 10 \) (rows). \ourmethod{} consistently outperforms standard beam search, while the alternative \ourmethod{} V2 underperforms both by a significant margin. Note that for $k = 1$,  \ourmethod{} and  \ourmethod{} V2 are identical, so only one line is shown.}
    \label{fig:Navigable Graphs}
\end{figure*}

 \subsection{Results}

We now discuss our experimental results on both truly navigable graphs and the commonly used heuristic graphs discussed above. 

\noindent \textbf{Results for Navigable Graphs.}
Results for navigable graphs are shown in Figure \ref{fig:Navigable Graphs} for SIFT, DEEP256, and MNIST for $k = 1$ and $10$. Results for $k = 100$ are included in \Cref{sec:addNav}. The y-axis shows recall, while the x-axis shows the average number of distance calculations per query. \ourmethod{} always performs at least on par with classic beam search, and often significantly better, with up to 30-40\% decrease in distance computations for a given recall.  \ourmethod{} V2 performs worse, so is not evaluated in future experiments.  The underperformance  of \ourmethod{} V2 is further explored in \Cref{sec:histogram}. In a nutshell, when $d_1 \ll d_k$, for small $\gamma$ we might stop when $d(q,x) < d_k$, which means we do not even explore all the neighbors of our current top-$k$ results. If we increase $\gamma$ to avoid this, we terminate too late when $d_1$ is close to $d_k$.

\noindent \textbf{Results for Heuristic Graphs.}
Our results for heuristic graphs with $k = 10$ across three datasets are shown in \Cref{fig:K=10_diff_datasets}. For additional results covering the remaining datasets and values of $k$, see \Cref{sec:addHeuristic}. In all cases, we see that \ourmethod{}  outperforms standard beam search, sometimes marginally, but sometimes by more than a factor of $2$, e.g., on  MNIST. The performance gains are robust to changing the graph construction, indicating that \ourmethod{} is a strong candidate for a drop-in replacement for standard beam search in graph-based ANN.

\noindent\textbf{Adaptivity Across Queries.} As discussed in \Cref{sec:relaxing}, \ourmethod{} seems to outperform standard greedy search because the distance-based stopping criterion is more ``adaptive'' to query difficulty. For hard queries with many approximate nearest neighbors, it tends to use more distance computations. However, the method terminates quickly on easy queries when there are few points with $d(q,x) \le (1+\gamma) d_k$. This phenomenon is illustrated for a sample of settings in  \Cref{fig:intro_teaser}.

\section*{Acknowledgements}
We would like to thank Ramon Li for contributions at an early stage of this work. Christopher Musco was partially supported by NSF Award 2106888. 
\vspace{20em}

\begin{figure*}[hbtp!]  
    \centering
    \includegraphics[width=.98\linewidth]{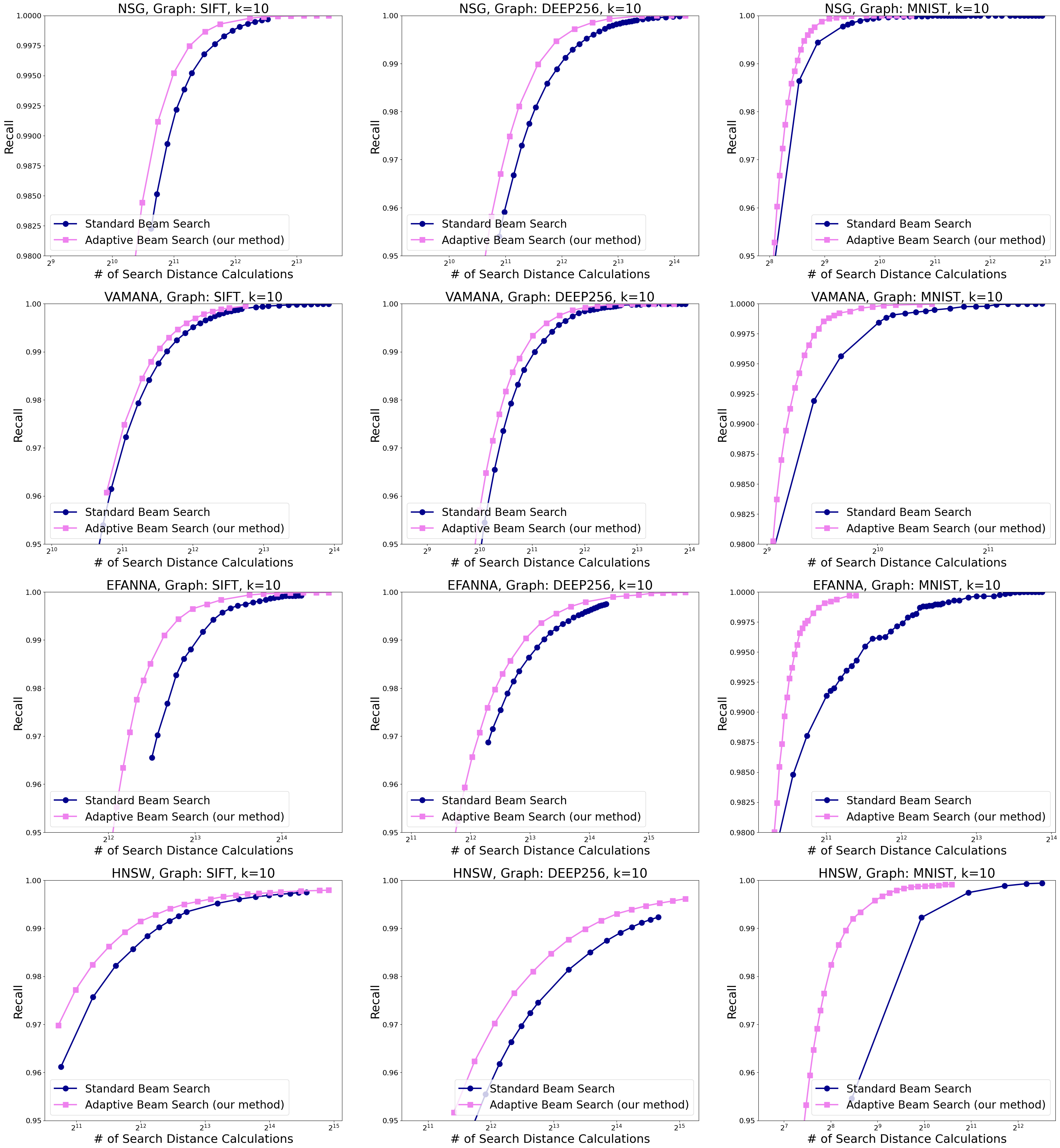}
    \caption{{\bf Heuristic Graphs:} Comparison of generalized beam search termination methods on  heuristic graphs produced by NSG, Vamana, EFANNA, and HNSW (rows), for \( k = 10 \) with 3 datasets: SIFT1M, DEEP256, and MNIST (columns). Adaptive beam search consistently outperforms standard beam search across all cases, sometimes by a significant margin.}
    \label{fig:K=10_diff_datasets}
\end{figure*}

\bibliographystyle{plain}
\bibliography{sample-base}


\clearpage

\appendix

\section{Additional Proofs}\label{sec:appendixProof}



\subsection{Nonexistence of Sparse $\alpha$-Shortcut Reachable Graphs}\label{sec:alphaFailure}

Recent work of Indyk and Xu \cite{IndykXu:2023} shows that, for $k=1$, standard greedy search (i.e., beam search with beam width $b = 1$) provably returns a $\left(\frac{\alpha+1}{\alpha-1} + \epsilon\right)$-approximate nearest neighbor for any constant $\epsilon$ when run on an \emph{$\alpha$-shortcut reachable search graph $G$}. The $\alpha$-shortcut reachability property requires that, for any nodes $x,y \in \{1, \ldots, n\}$ with $d(x,y) > 0$, there is some $z\in \mathcal{N}_G(x)$ with $\alpha \cdot d(z,y) < d(x,y)$ for some parameter $\alpha \ge 1$. The requirement exactly corresponds to navigability (\Cref{def:nav}) when $\alpha = 1$ and is a strictly stronger condition when $\alpha > 1$. 

The guarantee of \cite{IndykXu:2023} is non-vacuous when $\alpha > 1$. 
Unfortunately, it is also not hard to see that for any fixed $\alpha > 1$, there exist relatively low-dimensional point sets with no sparse $\alpha$-shortcut reachable graphs. 
In fact, for any constant $\alpha > 1$, it suffices to consider a random point set in $O(\log n)$ dimensional Euclidean space. 
This contrasts the situation for navigability ($\alpha =1$), since \cite{DiwanGouMusco:2024} show that an $O(\sqrt{n \log n})$ average degree navigable graph can be efficiently constructed for any point set in any dimension (indeed, in any metric space), under the mild assumption of unique pairwise distances between points (which can be ensured, e.g., by tie-breaking with node id).
Formally:
\begin{claim} For any $\alpha > 1$, let $m = O \left (\frac{\log n}{(1-1/\alpha)^2} \right )$. There are $n$ points in $m$-dimensional Euclidean space with unique pairwise distances, but the only $\alpha$-shortcut reachable graph for the points is the complete graph. Further, by \cite{DiwanGouMusco:2024}, the points admit a navigable graph with $O(\sqrt{n \log n})$ ag degree.
\end{claim}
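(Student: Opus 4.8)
The plan is to exhibit the point set via the probabilistic method. Take $\bv{x}_1, \ldots, \bv{x}_n$ to be i.i.d.\ uniform on the unit sphere in $\R^m$ (equivalently, normalized Gaussian vectors), with $m = C \frac{\log n}{(1-1/\alpha)^2}$ for a suitable absolute constant $C$, and let $d$ be Euclidean distance. The key structural observation is that $\alpha$-shortcut reachability is forced to be the complete graph exactly when the points are \emph{nearly equidistant}. Concretely, suppose all pairwise distances lie in an interval $[D, \alpha D)$ for some $D > 0$. Then for any three distinct points $x, y, z$ we have $d(z,y) \geq D > d(x,y)/\alpha$, so $z$ can never serve as an $\alpha$-shortcut from $x$ toward $y$; the only node with $\alpha \cdot d(z,y) < d(x,y)$ is $z = y$ itself (and $z=x$ fails since $\alpha > 1$). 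Hence for every ordered pair $(x,y)$ the edge $(x,y)$ must be present, i.e.\ the complete graph is the unique $\alpha$-shortcut reachable graph. So it suffices to produce $n$ points whose pairwise distances all lie within a factor of $\alpha$ of one another (and are distinct).

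The second step is a standard concentration argument. Writing $\|\bv{x}_i - \bv{x}_j\|_2^2 = 2 - 2\langle \bv{x}_i, \bv{x}_j\rangle$, the inner product of two independent points on the sphere concentrates around $0$ with sub-Gaussian tails of scale $1/\sqrt{m}$, i.e.\ $\Prob[\,|\langle \bv{x}_i, \bv{x}_j\rangle| \geq t\,] \leq 2 e^{-\Omega(m t^2)}$. Taking a union bound over all $\binom{n}{2}$ pairs and setting $t = \Theta(\sqrt{\log n / m})$, with high probability every inner product is at most $t$ in magnitude, so every squared distance lies in $[2(1-t), 2(1+t)]$ and the ratio of the largest to the smallest pairwise distance is at most $\sqrt{(1+t)/(1-t)}$. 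Uniqueness of pairwise distances holds almost surely since the distribution is continuous (or can be enforced by an arbitrarily small perturbation, or by lexicographic tie-breaking on node id). This places all distances in an interval $[D, \alpha D)$ of the form needed in the first step as soon as $\sqrt{(1+t)/(1-t)} < \alpha$.

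The main point requiring care is calibrating $m$ so that the dependence on $(1-1/\alpha)^2$ comes out exactly as claimed. The condition $\sqrt{(1+t)/(1-t)} < \alpha$ rearranges to $t < \frac{\alpha^2-1}{\alpha^2+1}$, and since $\frac{\alpha^2-1}{\alpha^2+1} = \Theta(1 - 1/\alpha)$ for all $\alpha > 1$ (the implied constants are absolute, as $\frac{\alpha(\alpha+1)}{\alpha^2+1}$ stays bounded in $[1, 2]$ over $\alpha \in (1,\infty)$), the requirement becomes $\sqrt{\log n/m} = \Theta(t) < \Theta(1-1/\alpha)$, i.e.\ $m = \Theta\!\big(\log n / (1-1/\alpha)^2\big)$, matching the claim. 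I would track the constants explicitly through the inner-product tail bound and this rearrangement, as this is the only quantitatively delicate part; everything else is the deterministic forcing argument and a union bound. Finally, the navigable-graph half of the statement needs no separate argument: since the constructed points have unique pairwise distances, the result of \cite{DiwanGouMusco:2024} immediately guarantees a navigable graph of average degree $O(\sqrt{n \log n})$ on them.
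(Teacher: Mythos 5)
Your proposal is correct and follows essentially the same route as the paper: both reduce the claim to exhibiting $n$ points whose pairwise distances are within a factor of $\alpha$ of one another (which deterministically forces the complete graph) and then produce such points via concentration of random vectors in $m = \Theta(\log n/(1-1/\alpha)^2)$ dimensions plus a union bound over pairs. The only difference is cosmetic: you sample uniformly from the sphere and use sub-Gaussian inner-product tails, while the paper uses random $\pm 1$ vectors and binomial concentration of $\|x-y\|_2^2$ around $2m$; both calibrations of $m$ agree.
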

Note that for constant $\alpha > 1$, $1-1/\alpha$ is a constant bounded away from $0$, so $m = O(\log n)$.
\begin{proof}
It suffices to find a set of $n$ points whose pairwise distances all lie in the range $(1/\alpha,1]$. Then, for any $x \neq y$, the only $z$ with $\alpha \cdot d(z,y) < d(x,y)$ is $z = y$. Thus, to ensure $\alpha$-shortcut reachability, all nodes must be connected to all other nodes -- i.e., $G$ must be the complete graph.

If we are not concerned about the dimensionality, finding a set of points in Euclidean space  with all pairwise distances lying in $(1/\alpha,1]$ is trivial: take the $n$ standard basis vectors in $\R^n$, scaled by $1/\sqrt{2}$ so that they all have distance $1$ from each other. Subtract an infinitesimally small random amount from the non-zero entry of each so that all pairwise distances are unique, but still lie in $(1/\alpha,1]$. 

To obtain a result in lower dimensions, we instead consider random points. 
Concretely, consider $n$ points in $\R^{m}$ with each entry set independently to $1$ or $-1$ with probability $1/2$. 
For each $x,y$, we have $\E[\|x-y\|_2^2] = 2m$ and by a starndard binomial 
concentration bounds, $\Pr[|\|x -y \|_2^2 - 2m| \ge m (1-1/\alpha)] \le \exp(-\Omega((1-1/\alpha)^2 \cdot m))$.  Setting $m = O \left (\frac{\log n}{(1-1/\alpha)^2} \right )$, this probability is bounded by $1/n^{c}$ for a large constant $c$. Taking a union bound over all ${n \choose 2} < n^2$ pairs of points, we see that all their squared pairwise distances lie in the range $\left (2m(1-\frac{1-1/\alpha}{2}), 2m (1+\frac{1-1/\alpha}{2}) \right )$ with probability at least $1-1/n^{c-2}$. Normalizing by $ 2m (1+\frac{1-1/\alpha}{2})$, all the squared pairwise distances are less than one $1$ and greater than 
$\frac{1-\frac{1-1/\alpha}{2}}{1+\frac{1-1/\alpha}{2}} \ge 1 - (1-1/\alpha) = 1/\alpha,$ 
where we use the fact that $\frac{1-x}{1+x} \ge 1-2x$ for all $x$. Thus, all squared pairwise distances, and in turn all pairwise distances, lie in the range $(1/\alpha,1)$, as desired. We can again ensure unique pairwise distances by adding arbitrarily small random perturbations to each point, completing the claim.
\end{proof}

\subsection{Failure of Beam Search on Navigable Graphs}\label{sec:beamFailure}

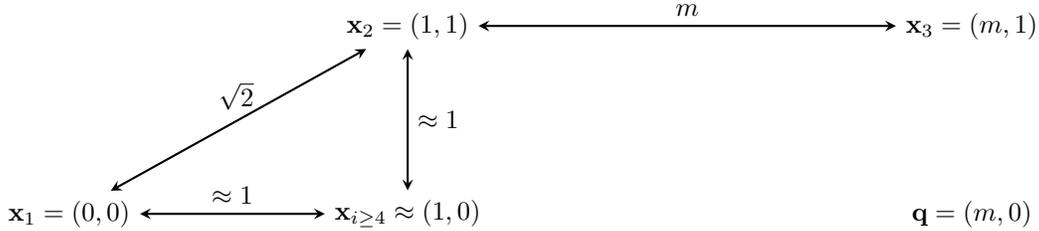
\begin{figure}[h!]
\centering
\begin{tikzpicture}[<->,>=stealth,node distance=2cm,thick]
    \tikzset{vertex/.style = {shape=circle,draw,minimum size=1.5em}}
    
    \node (A) at (-2,0) {$\bv x_1 = (0,0)$};
    \node (B) at (2.5,0) {$\bv x_{i \ge 4} \approx (1,0)$};
    \node (C) at (2.5,2.5) {$\bv x_2 = (1,1)$};
    \node (D) at (10,2.5) {$\bv x_3 = (m,1)$};
    \node (E) at (10,0) {$\bv q = (m,0)$};
    
    
    \draw (A) -- node[above]{$\approx 1$} (B);
    \draw (A) -- node[above]{$\sqrt{2}\,$} (C);
    \draw (B) -- node[right]{$\approx 1$} (C);
    \draw (C) -- node[above]{$m$} (D);
    
\end{tikzpicture}
\caption{Example showing that standard beam search fails to find a nearest neighbor in a navigable graph. Points $\bv x_4,\ldots, \bv x_n$ are all located arbitrarily close to $(1,0)$. They are all  connected to $\bv x_1$ and $\bv x_2$, as well as to each other. The graph is navigable, since we can navigate from $\bv x_1, \bv x_4,\ldots, \bv x_n$ to $\bv x_3$ and vice-versa through $\bv x_2$. All other nodes are directly connected to each other. Suppose beam search with beam width $b \le n-3$ is initialized at $\bv x_1$ with query $\bv q$. Because $\bv x_4,\ldots, \bv x_n$ are all closer to the $\bv q$ than $\bv x_2$, the method will never expand $\bv x_2$ and thus fail to reach the nearest neighbor $\bv x_3$.}
\label{fig:counterExample}
\end{figure}

We next give a simple counterexample, showing that, unless the beam width is set to essentially the full dataset size, standard beam search on a navigable graph can fail to find an approximate nearest neighbor when run on a navigable graph. This observation in part motivates the definition of our alternative ``distance based'' stopping rule, \eqref{eq:our_rule}, and the resulting \ourmethod{} algorithm.
\begin{claim}
For any finite $C$, there exists a set of $n$ points in $2$-dimensional Euclidean space and a navigable graph $G$ such that, for some query point $q$, beam search run on $G$ with beam width $b \le n-3$ returns $\tilde{x}$ with $d(q,\tilde{x}) \geq C\cdot d( q,  x^*)$. 
\end{claim}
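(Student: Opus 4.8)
The plan is to build on the explicit two-dimensional construction sketched in \Cref{fig:counterExample}, parametrized by a scale $m$ that I will drive to infinity to defeat any fixed approximation factor $C$. Concretely, I would place the query at $\bv q = (m,0)$, its true nearest neighbor at $\bv x_3 = (m,1)$ (so $d(\bv q, \bv x_3) = 1$), a ``bridge'' node at $\bv x_2 = (1,1)$, a starting node at $\bv x_1 = (0,0)$, and a cluster of $n-3$ ``decoy'' nodes $\bv x_4, \ldots, \bv x_n$ at positions infinitesimally perturbed around $(1,0)$. The edge set makes $\{\bv x_1, \bv x_2, \bv x_4, \ldots, \bv x_n\}$ a clique and connects $\bv x_3$ only to $\bv x_2$. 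Throughout I would add arbitrarily small random perturbations to all coordinates to guarantee unique pairwise distances (as \Cref{def:nav} and the beam-search bookkeeping implicitly assume), chosen small enough that none of the strict inequalities below are affected.

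First I would verify navigability (\Cref{def:nav}). Within the clique every pair of distinct nodes is directly connected, so the target $z = y$ itself witnesses the condition. The only nontrivial cases involve $\bv x_3$. To route \emph{toward} $\bv x_3$ from any clique node $x \neq \bv x_2$, the neighbor $\bv x_2$ works because $d(\bv x_2, \bv x_3) = m-1$ is strictly smaller than $d(x, \bv x_3)$ for every clique node $x$ (each of which sits at horizontal coordinate $\le 1$, hence at distance $> m-1$ from $\bv x_3$). To route \emph{away} from $\bv x_3$, its unique neighbor $\bv x_2$ is strictly closer to every other node $y$ than $\bv x_3$ is, since $\bv x_2$ has horizontal coordinate $1$ while $\bv x_3$ has horizontal coordinate $m$ and every other node lies near the left. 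These comparisons hold with slack for $m$ large, so the perturbations do not break them, and the graph is navigable.

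Next I would trace beam search initialized at $\bv x_1$. Expanding $\bv x_1$ discovers exactly the clique, i.e.\ $\bv x_2$ and all $n-3$ decoys; expanding any decoy or $\bv x_1$ only rediscovers clique members, so no further node enters $\mathcal D$ unless $\bv x_2$ is expanded. The key computation is that each decoy lies at distance $\approx m-1$ from $\bv q$ whereas $d(\bv q, \bv x_2) = \sqrt{(m-1)^2 + 1} > m-1$, so all $n-3$ decoys are strictly closer to $\bv q$ than $\bv x_2$. By the correct reading of termination rule \eqref{eq:beam_rule} --- a node is expanded only while it is among the $b$ closest discovered nodes --- a beam width $b \le n-3$ guarantees that $\bv x_2$, being only the $(n-2)$-th closest, is never expanded. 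Since $\bv x_2$ is the sole in-neighbor of $\bv x_3$, the true nearest neighbor is never discovered, and the returned point $\tilde x$ is a decoy with $d(\bv q, \tilde x) \ge m - 1 - o(1)$, giving ratio $\ge C$ once I set $m \ge C + 2$. I expect the main obstacle to be bookkeeping rather than ideas: I must pin down the exact semantics of \eqref{eq:beam_rule} (in particular whether the node being popped counts itself, which fixes the boundary between ``$b$ closest'' and ``expanded''), and I must fix the perturbation scale once and carry it through so that unique distances, navigability, and the strict ordering ``every decoy beats $\bv x_2$'' all hold simultaneously.
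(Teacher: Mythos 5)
Your proposal is correct and takes essentially the same approach as the paper's own proof: the identical two-dimensional construction with a decoy cluster near $(1,0)$, a bridge node $\bv x_2=(1,1)$, and the true nearest neighbor $\bv x_3=(m,1)$, together with the same observation that all $n-3$ decoys are closer to $\bv q$ than $\bv x_2$, so beam width $b \le n-3$ prevents $\bv x_2$ from ever being expanded and hence $\bv x_3$ from being discovered. Your navigability verification and distance bookkeeping (e.g., $d(\bv q,\tilde x)\approx m-1$ rather than $m-\epsilon$) are, if anything, slightly more careful than the paper's.
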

\begin{proof}
Consider the following dataset in $2$-dimensional Euclidean space shown in \Cref{fig:counterExample}: $\bv x_1 = (0,0), \bv x_2 = (1,1), \bv x_3 = (m,1)$ for some arbitrarily large value $m$. Let $\bv x_4,\ldots, \bv {x}_n$ all be located at arbitrary positions in an $\epsilon$ ball around $(1,0)$ for arbitrarily small $\epsilon$. We can check that the graph with the following two-way edges is navigable: $(\bv x_2, \bv x_3)$ and $(\bv x_i, \bv x_j)$ for all $i\in \{1,2\}, j\in  \{4,\ldots,n\}$.

Consider beam search initialized at starting point $\bv x_1 = (0,0)$ with query $\bv q = (m,0)$. The nearest neighbor to $\bv q$ is $\bv x_3$ with $\|{\bv q - \bv x_3}\|_2 = 1$. In the first step of beam search, all neighbors of $\bv x_1$ ($\bv x_2,\bv x_4,\ldots, \bv x_n$) will be added to the search queue.  Since $\bv x_2$ is further from $\bv q$ than all nodes in $\bv x_4,\ldots, \bv x_n$, the algorithm will then expand nodes from this set in succession, adding no new nodes to the queue since none of these nodes are connected to $\bv x_3$, the only remaining unexplored node. If $b \le n-3$, the algorithm will then terminate, with $\bv x_2$ never expanded and $\bv x_3$ never explored. 

As a result, beam search returns some $\bv{\tilde x} \in \{\bv x_4,\ldots,\bv x_n \}$ with distance $\|\bv q - \bv{\tilde x}\|_2 \ge m - \epsilon$. It thus achieves approximation factor $\frac{\|\bv q - \bv{\tilde x}\|_2}{\|\bv q - \bv{x}_3\|_2} \ge \frac{m-\epsilon}{1}$. Setting $m = C+\epsilon$ gives the result.
\end{proof}

\section{Additional Implementation Details}

\subsection{Pseudocode for Generalized Beam Search Variants}\label{sec:pseudo}

Below, we provide detailed pseudocode for generalized beam search (\Cref{alg:gen_beam_search}) under stopping conditions  \eqref{eq:greedy_rule} (classic greedy search), \eqref{eq:beam_rule} (classic beam search), and \eqref{eq:our_rule} (\ourmethod{}). While the greedy search order and stopping rule determine the number of distance computations performed, it is possible to optimize runtime and storage requirements by using appropriate data structures to implement the stopping rule. Additionally, we can avoid adding nodes to the candidate set $\mathcal C$ if we are sure that, if popped off $\mathcal{C}$, those nodes would trigger the termination condition anyways.

\smallskip

\noindent\textbf{\ourmethod{} and Greedy Search.} Pseudocode for \ourmethod{} is given in \Cref{alg: alpha approximation}. The same pseudocode can be used for greedy search, by setting the approximation parameter $\gamma = 0$, so that the \ourmethod{} stopping rule \eqref{eq:our_rule} becomes the greedy rule \eqref{eq:greedy_rule}. 

The key optimization is that we maintain a heap, $\mathcal{B}$, of the $k$ nearest points seen so far, which avoids having to extract these neighbors from the set of discovered nodes $\mathcal{D}$ every time termination condition \eqref{eq:our_rule} is checked. Further, if a newly discovered node has distance larger than $(1+\gamma)$ times the $k^\text{th}$ closest seen so far, it will always trigger termination if considered for expansion. Thus, we can avoid adding it to the candidate set of unexpanded nodes, $\mathcal{C}$. See Lines 12-17. This optimization avoids letting $\mathcal{C}$ grow unnecessarily large with nodes that will never be expanded.

\smallskip

\noindent\textbf{Classic Beam Search.} Pseudocode for classic beam search is given in \Cref{alg:classic_beam_search}. The implementation is essentially identical to that of \ourmethod{}, except that a heap of the $b \ge k$ nearest points seen so far must be maintained to efficiently check stopping condition \eqref{eq:beam_rule} each time a node is considered for expansion or newly discovered. At the end of the algorithm, the $k$ nearest points from this heap are ultimately returned. See Lines 22-23.

\begin{algorithm}[h!]
\caption{Adaptive Beam Search}
\label{alg: alpha approximation}
\begin{algorithmic}[1]
\Require Search graph $G$ over nodes $\{1, \ldots, n\}$, starting node $s$, distance function $d$, query $q$, target number of nearest neighbors $k$, approximation parameter $\gamma$.
\Ensure A set of $k$ nodes ${\mathcal{B}} \subset \{1, \ldots, n\}$. Each $x\in {\mathcal{B}}$ is ideally close to $q$ with respect to distance function $d$.
  \algrule

\State $\mathcal{D} \gets \{s\}$ \Comment{Dictionary of Discovered nodes}
\State $\mathcal{C} \gets \{(s, d(q, s))\}$ \Comment{Min-heap of candidates}
\State $\mathcal{B} \gets \{(s, d(q, s))\}$ \Comment{Max-heap of best results}
\While{$C$ is not empty}
    \State $(x, d(q, x)) \leftarrow \text{heappop}(C)$
    \If{$|\mathcal{B}| = k \text{ and }  (1 + \gamma) \cdot \text{findmax}(\mathcal B) \le d(q, x)$} 
        \State \textbf{break} \Comment{Termination condition from Eq.~\eqref{eq:our_rule}}
    \EndIf
    \For{all $y \in \mathcal{N}_G(x)$}
        \If{$y \notin \mathcal{D}$}
            \State $\mathcal{D} \leftarrow \textrm{insert}(\mathcal{D},y)$
            \If{$|\mathcal{B}| < k$ or $d(q, y) < (1+\gamma) \cdot \text{findmax}(\mathcal B)$}
                \State \textrm{heappush}($\mathcal{B}, (y, d(q, y))$)
                \State \textrm{heappush}($\mathcal{C}, (y, d(q, y))$)
                \If{$|\mathcal{B}| = k+1$}
                    \State \textrm{heappop}($\mathcal{B}$)
                \EndIf
            \EndIf
        \EndIf
    \EndFor
\EndWhile
\State \Return $\mathcal{B}$

\end{algorithmic}
\end{algorithm}

\begin{algorithm}[h!]
\caption{Classic Beam Search}
\label{alg:classic_beam_search}
\begin{algorithmic}[1]
\Require Search graph $G$ over nodes $\{1, \ldots, n\}$, starting node $s$, distance function $d$, query $q$, beam width $b$, target number of nearest-neighbors $k$.
\Ensure A set of $k$ nodes ${\mathcal{B}} \subset \{1, \ldots, n\}$. Each $x\in {\mathcal{B}}$ is ideally close to $q$ with respect to distance function $d$.
  \algrule
\State $\mathcal{D} \gets \{s\}$ \Comment{Dictionary of discovered nodes}
\State $\mathcal{C} \gets \{(s, d(q, s))\}$ \Comment{Min-heap of candidates}
\State $\mathcal{B} \gets \{(s, d(q, s))\}$ \Comment{Max-heap of best results}
\While{$\mathcal{C}$ is not empty}
    \State $(x, d(q, x)) \gets \textrm{heappop}(\mathcal{C})$
    \If{$|\mathcal{B}| = b$ and $\textbf{findmax}(\mathcal B) \leq d(q, x)$}
        \State \textbf{break} \Comment{Termination condition from Eq.~\eqref{eq:beam_rule}}
    \EndIf
    \For{all $y \in \mathcal{N}_G(x)$}
        \If{$y \notin \mathcal{D}$}
           \State $\mathcal{D} \gets \textrm{insert}(\mathcal{D},y)$
            \If{$|\mathcal B| < b$ or $d(q, y) < \text{findmax}(\mathcal B)$}
                \State \textrm{heappush}($\mathcal{B}, (y, d(q, y))$)
                \State \textrm{heappush}($\mathcal{C}, (y, d(q, y))$)
                \If{$|\mathcal{B}| = b+1$}
                    \State \textrm{heappop}($\mathcal{B}$)
                \EndIf
            \EndIf
        \EndIf
    \EndFor
\EndWhile
\For{$i=1\ldots(b-k)$} \State $\text{heappop}(\mathcal B)$. \Comment{Reduce $\mathcal B$ down to the best $k$ results. \EndFor}
\State \Return $\mathcal{B}$
\end{algorithmic}
\end{algorithm}

\subsection{Sparse Navigable Graph Construction via Pruning}\label{sec:prune}

As discussed, in \Cref{sec:experiments}, we evaluate the performance of our \ourmethod{} method on both truly navigable graphs, where it is backed by the theoretical guarantee of \Cref{thm:main}, and on heuristic ``approximately navigable'' graphs constructed using a variety of popular methods.

To construct sparse navigable graphs, we use the construction of \cite{DiwanGouMusco:2024}. For $m = \lfloor \sqrt{3n\ln n} \rfloor$, each node is connected to its $m$ nearest neighbors along with $\lceil \frac{3n \ln n}{m}\rceil$ uniformly random nodes. As shown in \cite{DiwanGouMusco:2024}, such a graph is navigable with high probability and has average degree $O(\sqrt{n \log n})$.

We further sparsify these graphs, both to facilitate running large scale experiments and to more accurately reflect performance on graphs with practical levels of sparsity. To do so, we employ a pruning strategy that removes redundant edges from the graph  while maintaining navigability. Pseudocode for the pruning method is given in  \Cref{alg:prune_navigable_graph}. It starts with a navigable graph $G$, then iterates over each node $s$ in the graph, only keeping a minimal set of out edges needed to ensure navigability. In particular, for each node $t \in \{1,\ldots,n\} \setminus \{s\}$, by  \Cref{def:nav}, we must ensure that $s$ has an out neighbor $x$ with $d(x,t) < d(s,t)$. The method iterates over each  $t$, adding an out neighbor of $s$ to the \emph{keep} set only if it is needed to ensure this condition holds for some $t$ (i.e., if no edges alreadt in \emph{keep} ensure the condition). After checking all $t$, it removes all neighbors of $s$ not in \emph{keep}.

\begin{table}[h]
    \caption{Average out degrees of navigable graphs before and after pruning. Note that we run on subsamples of the full datasets from \Cref{tab:datasets} due to the high computational cost of pruning.}
    \label{tab:navigable_graph_degree}
    \centering
    \smallskip
    \begin{tabular}{lcccc}
        \toprule
        \textbf{Dataset} & \textbf{Dimensions} & \textbf{\# Nodes} & \makecell{ \textbf{Average Out Degree} \\ \textbf{Before Pruning}} & \makecell{ \textbf{Average Out Degree} \\ \textbf{After Pruning}} \\
        \midrule
        SIFT1M\cite{JegouDouzeSchmid:2011}  & 128  & 100K  & 3682 & 59 \\
        
        DEEP96 \cite{babenko2016efficient, AumullerBernhardssonFaithfull:2020}  & 96   & 100K & 3682  &77\\
        MNIST \cite{lecun1998gradient, AumullerBernhardssonFaithfull:2020}  & 784  & 50K  & 2516 & 45  \\
    \bottomrule
    \end{tabular}
\end{table}

The pruning strategy can produce navigable graphs that are significantly sparser than those constructed by \cite{DiwanGouMusco:2024}. See \Cref{tab:navigable_graph_degree} for a summary of the average degrees achieved for our tested datasets. Unfortunately, the runtime of our pruning method scales at least quadratically with $n$. This limits our ability to apply the method to the full datasets. An interesting open question is to improve the running time of constructing very sparse and truly navigable graphs.

\begin{algorithm}[h!]
\caption{Navigable Graph Pruning}
\begin{algorithmic}[1]
\Require Navigable graph $G$ on nodes $\{1,\ldots,n\}$, distance function $d$.
\Ensure Subgraph of $G$ that is still navigable over $\{1,\ldots,n\}$ but ideally has many fewer edges.
  \algrule
\For {all $s$ in $\{1,\ldots, n\}$}
    \State $\textrm{keep} \gets \{\}$ \Comment{Set of out neighbors that will remain after pruning.}
    \State $\textrm{remove} \gets \mathcal{N}_G(s)$  \Comment{Set of out neighbors that will be removed after pruning.}
    \For {all $t$ in $\{1,\ldots,n\} \setminus \{s\}$}
        \State $\textrm{navigable} \gets$ FALSE
        \For {all $x$ in $\textrm{keep}$}
            \If {$d(x, t) < d(s, t)$}
                \State $\textrm{navigable} \gets$ TRUE \Comment{Navigability condition satisfied. No need to add an edge.}
                \State \textbf{break}
            \EndIf
        \EndFor
        \If {not $\textrm{navigable}$}
            \For {all $y$ in $\textrm{remove}$}
                \If {$d(y, t) < d(s, t)$}
                    \State $\textrm{keep} \gets \textrm{keep} \cup \{y\}$ \Comment{Keeping edge from $s$ to $y$ ensures navigability to $t$}
                    \State $\textrm{remove} \gets \textrm{remove} \setminus \{y\}$
                    \State \textbf{break}
                \EndIf
            \EndFor
        \EndIf
    \EndFor
    \For {all $y$ in $\textrm{remove}$}
        \State $G.\textrm{remove\_edge}(s, y)$
    \EndFor
\EndFor
\Return $G$
\end{algorithmic}
\label{alg:prune_navigable_graph}

\end{algorithm}

\subsection{Omitted Details on Experimental Setup}\label{sec:expDetails}

We next give additional details on the datasets and graphs used to evaluate \ourmethod{}.

\smallskip

\noindent\textbf{Datasets.}
Table \ref{tab:datasets} summarizes the six benchmark datasets used in our experiments. 
The citation for each dataset includes a note listing the URL where we obtained the specific version of the dataset used in our work.
The datasets are available under the following licenses: MIT License (MNSIST), CC0 1.0 Universal (SIFT, GIST), and the Open Data Commons Public Domain Dedication and License (GloVe). We were unable to find license information for Deep96 and Deep256. Both are available in the public domain. 

For DEEP96, we used a one million point pre-sampled dataset from \cite{AumullerBernhardssonFaithfull:2020}, but our $100$K points used for the navigable graph experiments were sampled from the original dataset available at \url{https://github.com/matsui528/deep1b_gt}.
For GloVe, we sampled one million nodes from the original dataset. The GIST data only includes 1K query points by default. To generate 10K query points, in order to match the other benchmarks, we sampled additional query points uniformly at random from the so-called \emph{learning} data points, which are included with GIST for hyperparameter tuning. We did not use this set of points for any other purpose or any parameter turning. 

\smallskip

\noindent\textbf{Graph Parameters.}
As discussed in \Cref{sec:experiments}, we construct heuristic graphs using four common methods: HNSW \cite{MalkovYashunin:2020},  Vamana \cite{SubramanyaDevvritKadekodi:2019}, NSG \cite{FuXiangWang:2019}, and EFANNA \cite{fu2016EFANNA}. We used our own implementations of HNSW and Vamana. Code for NSG is available under an MIT License at \url{https://github.com/ZJULearning/nsg} and for EFANNA under a BSD License at \url{https://github.com/ZJULearning/efanna}. 

The heuristic graph construction algorithms employed take as input various hyperparameters. Settings used for these hyperparameters are given in \Cref{tab:parameters}. For Vamana, we used the same hyperparameters for all datasets, matching those in the original paper \cite{SubramanyaDevvritKadekodi:2019}, which were found to work well for SIFT, DEEP96, and GIST; using the same parameters for the other datasets yielded similarly good results. 
The hyperparameters for EFANNA \cite{fu2016EFANNA} and NSG \cite{FuXiangWang:2019} for SIFT and GIST are taken from authors' repository \cite{FuXiangWang:2019}. The same parameters were also used by \cite{graspWSDM2022} and \cite{SubramanyaDevvritKadekodi:2019}. For NSG and EFANNA with DEEP96, we used the optimal values used by \cite{graspWSDM2022}. For EFANNA with MNIST, DEEP256, and GloVe, we tested them using the two set of hyperparameters- the ones used for SIFT and GIST- and picked the better performing. We did a similar thing for NSG with MNIST, DEEP256, and GloVe. 

\begin{table}[h!]
\caption{Experimental Hyperparameters for Different Datasetsa dn Graph Constructions}\label{tab:parameters}
\smallskip
\centering
\begin{tabular}{cccccc|cc|cccc|ccc}
\toprule
 & \multicolumn{5}{c|}{\textbf{EFANNA}} & \multicolumn{2}{c|}{\textbf{HNSW}} & \multicolumn{4}{c|}{\textbf{NSG}} & \multicolumn{3}{c}{\textbf{Vamana}} \\
\cmidrule(r){2-15} 
 \textbf{Dataset} & K & L & iter & S & R & M & efC & nn & R & L & C & L & R & $\alpha$  \\
\midrule
SIFT1M & 200 & 200 & 10 & 10 & 100 & 14 & 500 & 200 & 50 & 40 & 500 & 125 & 70 & 2\\
DEEP96 & 200 & 200 & 10 & 10 & 100 & 14 & 500 & 200 & 50 & 40 & 500 & 125 & 70 & 2\\
DEEP256 & 200 & 200 & 10 & 10 & 100 & 14 & 500 & 200 & 50 & 40 & 500 & 125 & 70 & 2\\
GloVe &200 & 200 & 10 & 10 & 100 & 16 & 500 & 400 & 70 & 60 & 500 & 125 & 70 & 2\\
GIST & 400& 400& 12 &15 &100 & 24 & 500 & 400 & 70 & 60 & 500 & 125 & 70 & 2\\
MNIST & 200 & 200 & 10 & 10 & 100  & 14 & 500 & 400 & 50 & 40 & 500 & 125 & 70 & 2\\
\bottomrule
\end{tabular}
\end{table}

For HNSW, we used the hyperparameters that \cite{graspWSDM2022} found to be optimal for SIFT, DEEP96, GIST, and GloVe . For HNSW on MNIST and DEEP256, we tested with values of M=14,16,24 and used the best performing on the standard beam search. Since, the authors found the ideal value of efc for SIFT, DEEP96, GIST, and GloVe to be 500, we used this value for DEEP256 and MNIST.

\smallskip
\noindent\textbf{Computational Resources.}
Navigable graphs were constructed using our pruning methods run on a single core of a 3.2GHz Intel Core i9-12900K CPU with access to 128GB of DDR5 4800mhz RAM. To accelerate pruning  and take advantage of available memory, we precomputed all pairwise distances between pairs of points in the dataset. Each graph required several hours to construct. All other experiments were run on a single 2.9GHz Intel(R) Xeon(R) Platinum 8268 CPU with access to 32GM of RAM, although at most 4GB was used for any individual experiment. Producing a single recall/distance computation tradeoff curve requires several hours for each dataset and algorithm.

\section{Additional Experimental Results}\label{sec:additionalExp}

In this section we include additional experimental results. 

\subsection{Navigable Graphs}\label{sec:addNav}

In \Cref{fig:Navigable Graphs K=100} we compare beam search termination conditions on three datasets for $k=100$. The results are similar to those reported in \Cref{fig:Navigable Graphs} for $k = 1$ and $k = 10$, but with less significant gains seen for \ourmethod{} as compared to standard beam search. As for smaller values of $k$, \ourmethod{} V2 underperforms both other methods.

\begin{figure*}[h]  
    \centering
    \includegraphics[width=\linewidth]{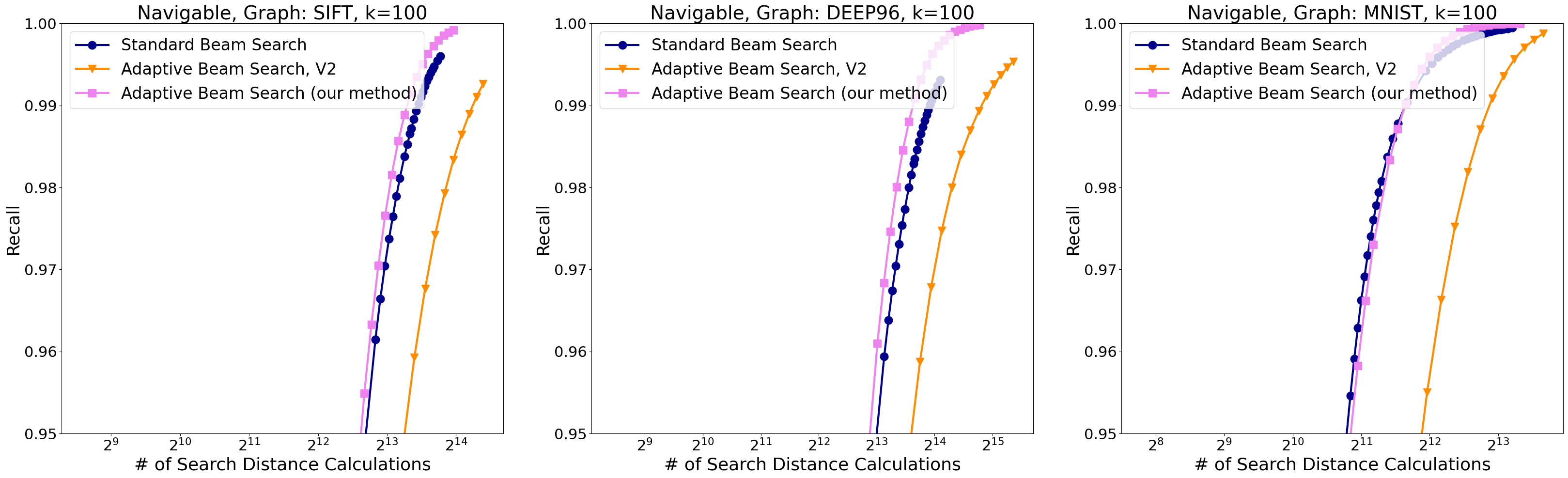}
    \caption{Comparison of generalized beam search termination conditions on navigable graphs across three datasets: SIFT1M, DEEP96, and MNIST (columns), with \( k = 100 \) (rows). \ourmethod{} consistently outperforms standard beam search, while the alternative \ourmethod{} V2 underperforms both by a significant margin.}
    \label{fig:Navigable Graphs K=100}
\end{figure*}

\begin{figure*}[h]  
    \centering
    \includegraphics[width=\linewidth]{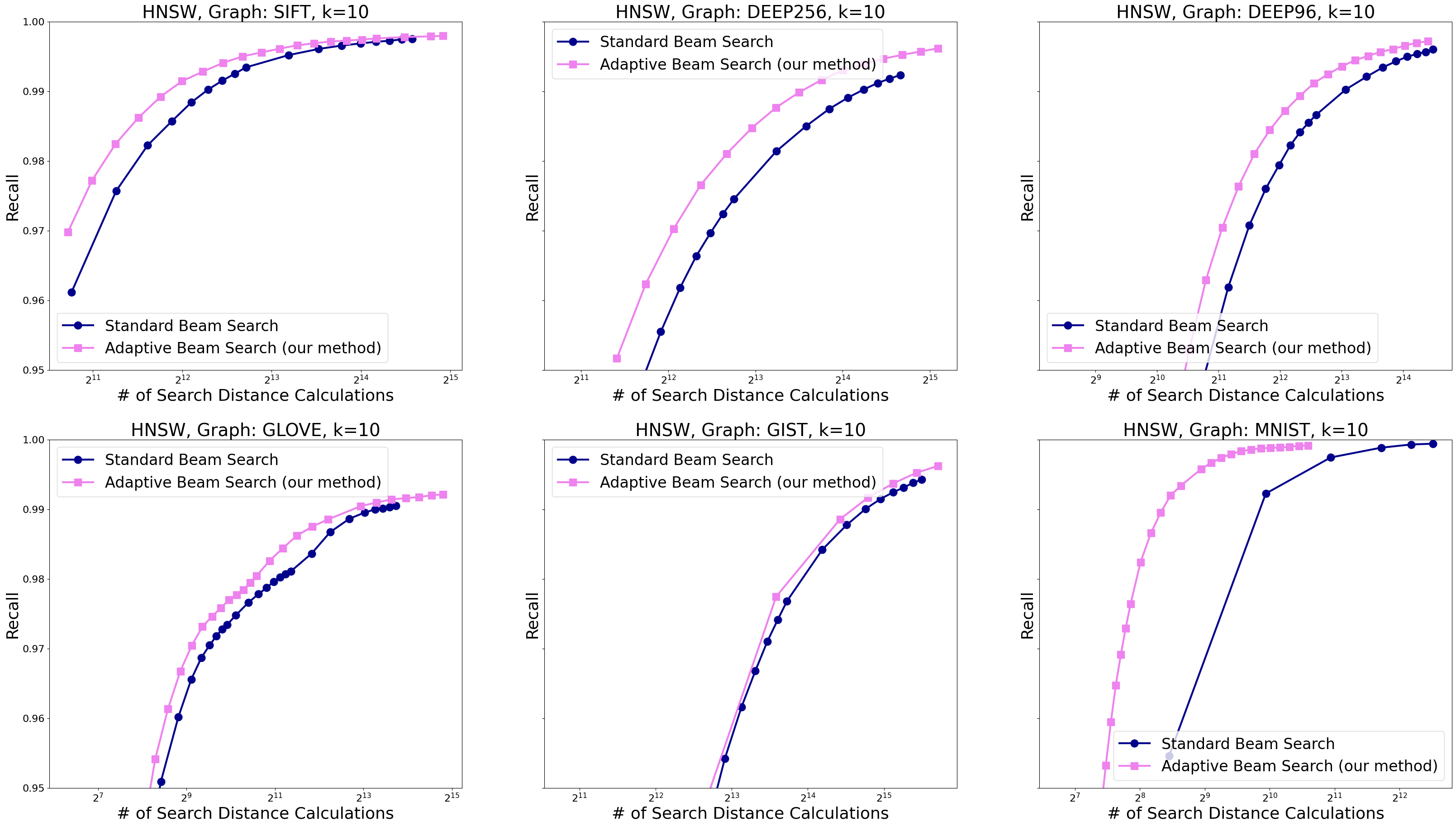}
    \caption{Comparison of generalized beam search termination methods on HNSW graphs with \( k = 10 \) across six datasets. \ourmethod{} outperforms standard beam search, with the degree of improvement varying across datasets.}
    \label{fig:HNSW_k=10}
\end{figure*}

\subsection{Heuristic Graphs}\label{sec:addHeuristic}

In \Cref{fig:HNSW_k=10} we compare beam search termination conditions on HNSW search graphs for all six benchmarks and $k = 10$. In \Cref{fig:HNSW_diff_k} we include further results on HNSW graphs for $k = 1$ and $k = 50$ across three datasets. As with our other experiments on heuristic graphs (see \Cref{fig:K=10_diff_datasets}), we see that \ourmethod{} generally outperforms standard beam search, sometimes by a large margin. One exception is for GIST with $k = 1$, where beam search performs marginally better.

\begin{figure*}[h]  
    \centering
    \includegraphics[width=\linewidth]{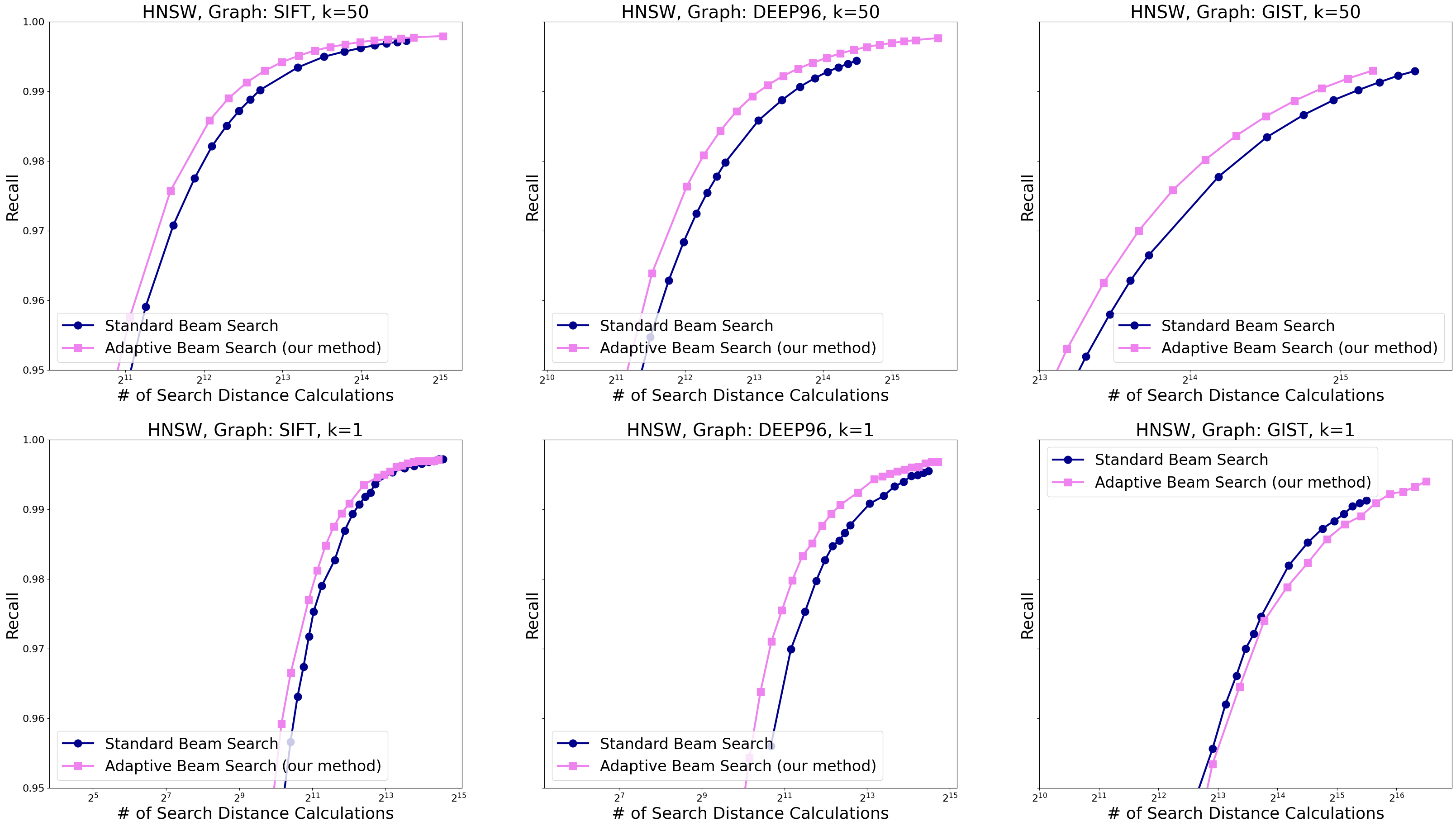}
    \caption{Comparison of generalized beam search termination methods on HNSW graphs across three datasets with \( k = 50 \) and \( k = 1 \). \ourmethod{} outperforms standard beam search as we vary \( k \), with the exception of GIST for \( k = 1 \), where it slightly underperforms.}
    \label{fig:HNSW_diff_k}
\end{figure*}

\begin{figure*}[h]
    \centering
    \begin{subfigure}[b]{0.32\linewidth}
        \centering
        \includegraphics[width=\linewidth]{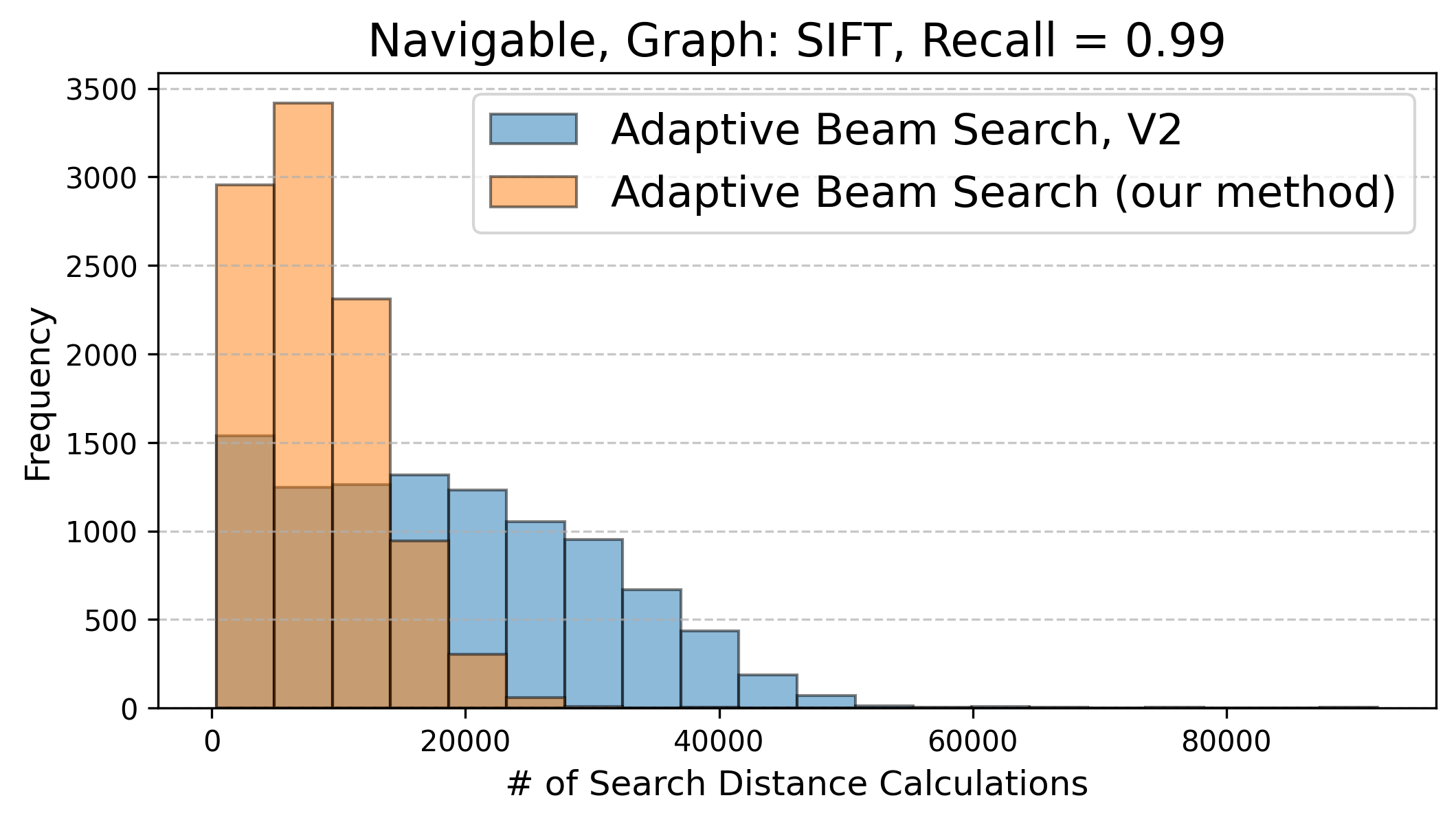}
        \label{fig:navig_histogram_1}
    \end{subfigure}
    \hfill
    \begin{subfigure}[b]{0.32\linewidth}
        \centering
        \includegraphics[width=\linewidth]{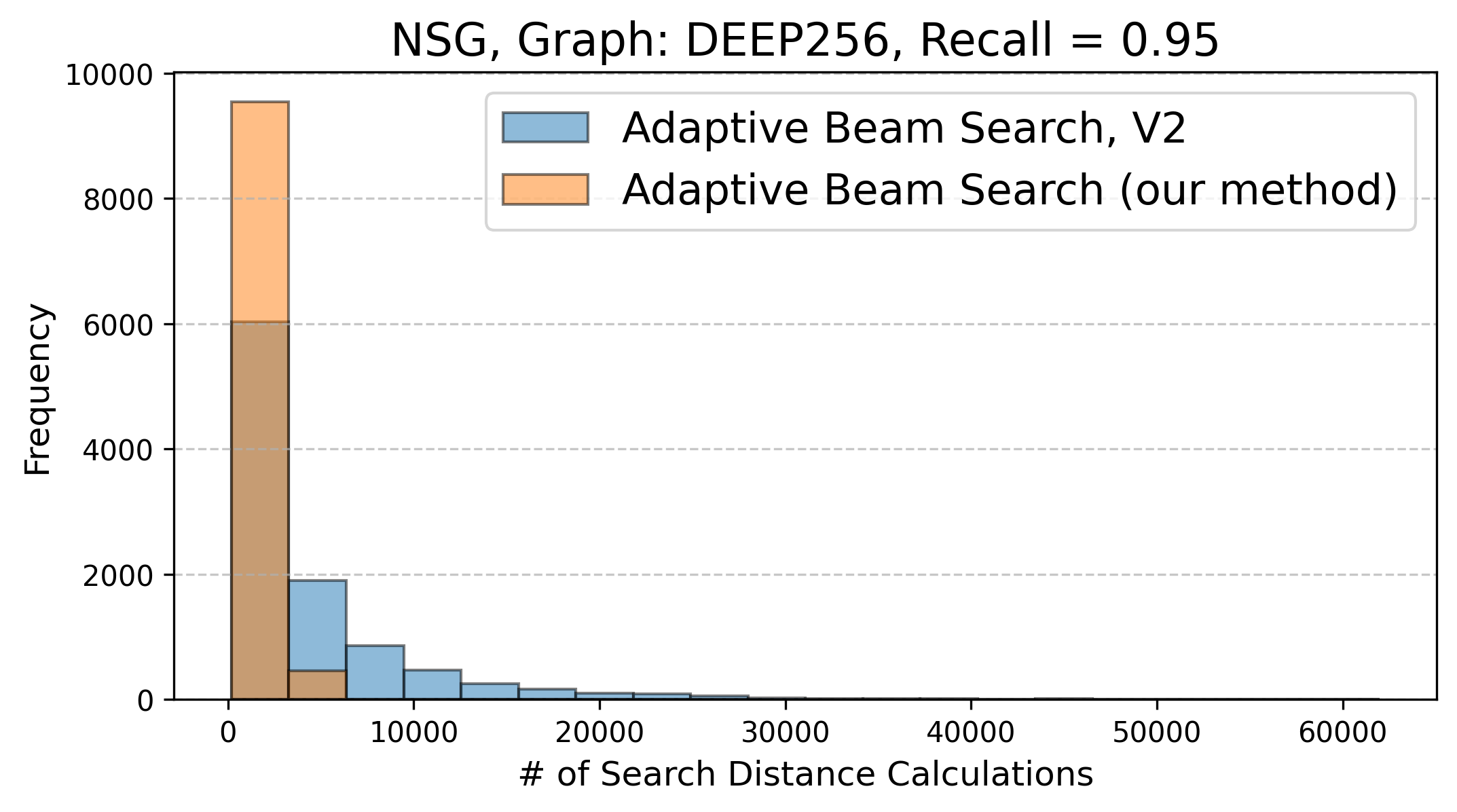}
        \label{fig:nsg_histogram_1}
    \end{subfigure}
    \hfill
    \begin{subfigure}[b]{0.32\linewidth}
        \centering
        \includegraphics[width=\linewidth]{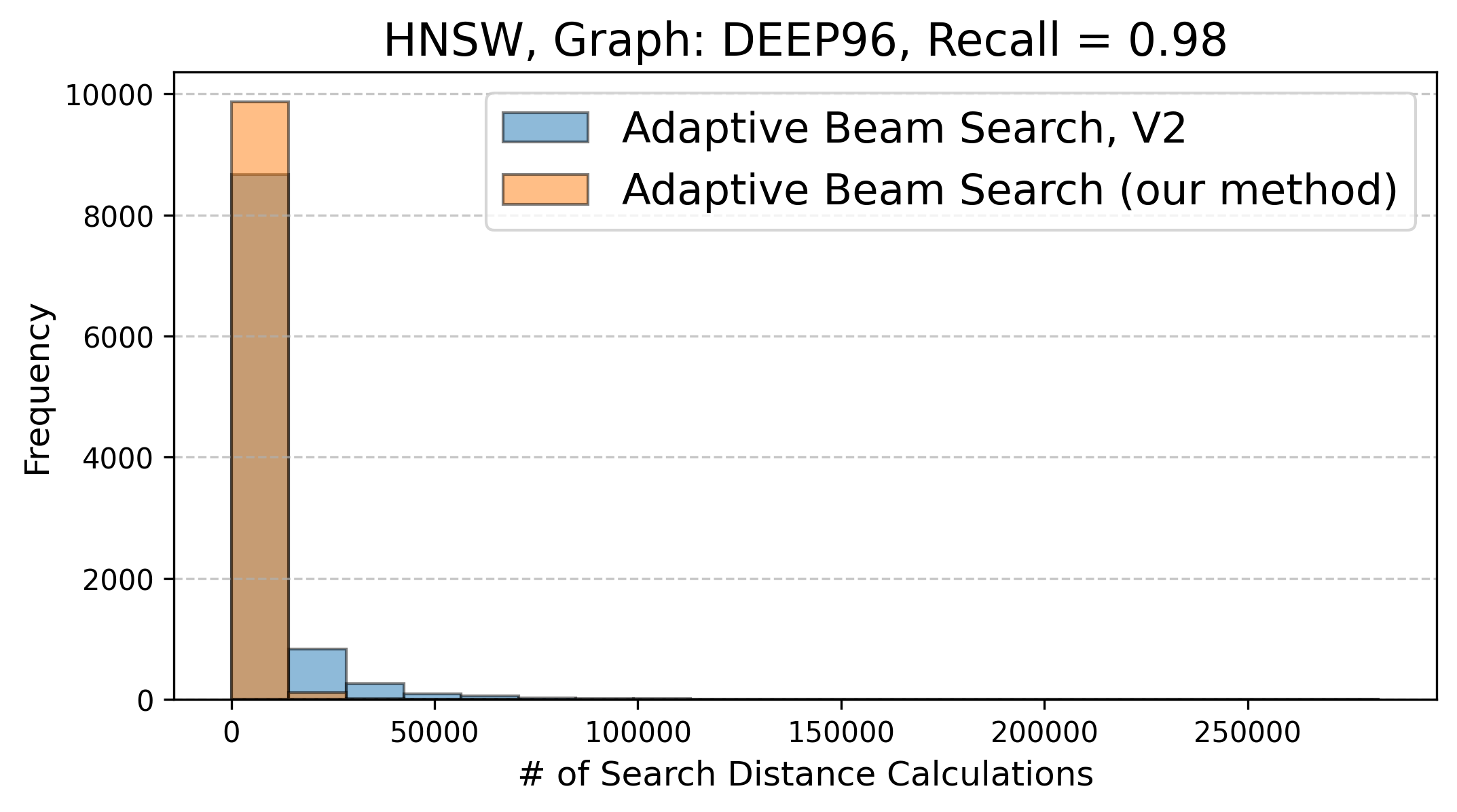}
        \label{fig:hnsw_histogram_1}
    \end{subfigure}
    \caption{Histograms for the number of distance computations performed by \ourmethod{} and \ourmethod{} V2. We tune the $\gamma$ parameter for each method to achieve a fixed recall value, finding that \ourmethod{} V2 has a heavier tail of queries that require many distance computations, in part explaining its poor performance seen in \Cref{fig:Navigable Graphs}.
    }
    \label{fig:v2Histograms}
\end{figure*}

\begin{figure*}[h!]  
    \centering
    \includegraphics[width=\linewidth]{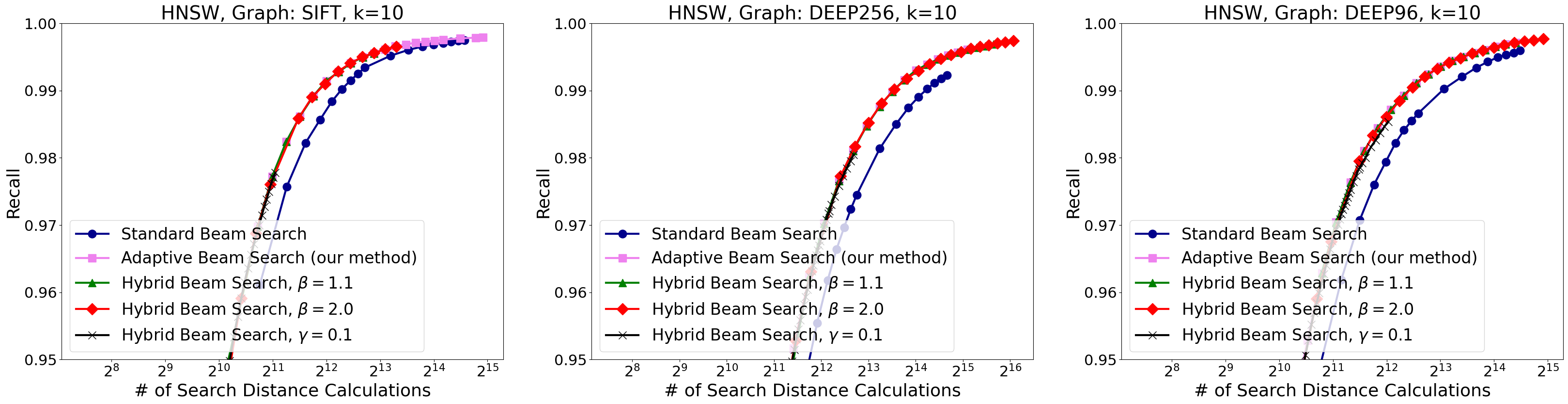}
    \caption{Evaluation of the Hybrid Beam Search termination rule from \eqref{eq:hybrid_rule} on three datasets. There is very little difference in performance between the method and \ourmethod{}. }
    \label{fig:HNSW hybrid}
\end{figure*}

\subsection{\ourmethod{} vs. \ourmethod{} V2}\label{sec:histogram}
As illustrated in \Cref{fig:Navigable Graphs}, \ourmethod{} V2, which uses the more aggressive stopping condition of \eqref{eq:v2}, generally underperforms both \ourmethod{} and classic beam search. We believe this is due to the fact that, to achieve high recall, the $\gamma$ parameter for this rule needs to be set high, causing the method to terminate late and perform a large number of distance computations on some queries. This phenomenon is illustrated in \Cref{fig:v2Histograms}.

\subsection{Hybrid Stopping Rule}
\label{sec:hybrid_rule}
As discussed in \Cref{sec:experiments}, it would be interesting to consider other relaxations of greedy search beyond beam search and \ourmethod{}. One obvious candidate is a rule that combines both relaxations. In particular, in \Cref{alg:gen_beam_search} we could choose to terminate if there are at least:
\begin{align}
    \label{eq:hybrid_rule}
        b \text{ items } j_1, \ldots, j_b\in \mathcal{D} \text{ with } (1+\gamma)  \cdot d(q, j_i) \leq d(q,x), 
\end{align}
where $b > k$ is a ``width parameter'' and $\gamma > 0$ is a distance-based relaxation. We ran initial experiments with this natural hybrid termination, which are shown in \Cref{fig:HNSW hybrid}. To obtain a trade-off curve between recall and distance computations, we either fixed $b = \beta\cdot k$ for a parameter $\beta > 1$ and then varied $\gamma$, or we fixed $\gamma$ and varied $\beta$. Somewhat surprisingly, the hybrid method appears to perform very similarly to \ourmethod{}, although further study of this termination condition and other relaxations would be valuable.


\end{document}